




\ifx\JVER\undefined%
\newcommand{\InJVer}[1]{}%
\newcommand{\InNotJVer}[1]{#1}%
\else \newcommand{\InJVer}[1]{#1}%
\newcommand{\InNotJVer}[1]{}%
\fi




\InNotJVer{%
   \documentclass[12pt]{article}%
   \usepackage[cm]{fullpage}%
}%
\InJVer{%
   \documentclass[format=acmsmall, review=false, screen=true]{acmart} }




\usepackage{graphicx}%
\usepackage{amsmath}%
\usepackage{color}%
\usepackage{xspace}%
\usepackage{paralist}
\usepackage{mleftright}%
\usepackage{amssymb}%
\usepackage{xcolor}%
\InNotJVer{%
   \usepackage{picins}%
}
\usepackage{caption}%

\usepackage{stmaryrd}%

\usepackage{abstract}

%

\InNotJVer{%
   \usepackage[amsmath,thmmarks]{ntheorem}%
   \theoremseparator{.}%

   \usepackage{titlesec}%
   \titlelabel{\thetitle. }%

   \setcounter{secnumdepth}{4}
}

\usepackage{hyperref}%
\hypersetup{%
   breaklinks,%
   colorlinks=true,%
   linkcolor=[rgb]{0.45,0.0,0.0},%
   citecolor=[rgb]{0,0,0.45}
}
\newcommand{\hrefb}[3][black]{\href{#2}{\color{#1}{#3}}}%

\numberwithin{figure}{section}%
\numberwithin{table}{section}%
\numberwithin{equation}{section}%

\renewcommand{\Re}{\mathbb{R}}%
%


\theoremstyle{plain}%

\InNotJVer{
   \newtheorem{theorem}{Theorem}[section]
   \newtheorem{lemma}[theorem]{Lemma}%

}
\newtheorem{theorem:unnumbered}{Theorem}[section] 

\newtheorem{claim}[theorem]{Claim}%
\InNotJVer{%
   \newtheorem*{example:unnumbered}[theorem:unnumbered]{Example}%
}
\InJVer{%
   \newtheorem{example:unnumbered}[theorem]{Example}
}

\newtheorem{problem}[theorem]{Problem}%
\newtheorem{observation}[theorem]{Observation}

\InNotJVer{%
   \theoremstyle{plain}%
   \theoremheaderfont{\sf}%
   \theorembodyfont{\upshape}%
   \newtheorem*{remark:unnumbered}[theorem]{Remark}%
   \newtheorem{remark}[theorem]{Remark}%
}
\InJVer{%
   \newtheorem{remark}[theorem]{Remark}%
}
\newtheorem{defn}[theorem]{Definition}
\InJVer{%
      \newtheorem{remark:unnumbered}[theorem]{Remark}%
}

\newcommand{\HLinkShort}[2]{\hyperref[#2]{#1\ref*{#2}}}
\newcommand{\HLink}[2]{\hyperref[#2]{#1~\ref*{#2}}}
\newcommand{\HLinkPage}[2]{\hyperref[#2]{#1~\ref*{#2}%
      $_\text{p\pageref{#2}}$}}
\newcommand{\HLinkPageOnly}[1]{\hyperref[#1]{Page~\refpage*{#1}%
      $_\text{p\pageref{#1}}$}}

\newcommand{\HLinkSuffix}[3]{\hyperref[#2]{#1\ref*{#2}{#3}}}
\newcommand{\HLinkPageSuffix}[3]{\hyperref[#2]{#1\ref*{#2}%
      #3$_\text{p\pageref{#2}}$}}

\newcommand{\deflab}[1]{\label{def:#1}}
\newcommand{\defref}[1]{\HLink{Definition}{def:#1}}%

\newcommand{\figlab}[1]{\label{fig:#1}}
\newcommand{\figref}[1]{\HLink{Figure}{fig:#1}}

\newcommand{\itemlab}[1]{\label{item:#1}}
\newcommand{\itemref}[1]{\HLinkSuffix{(}{item:#1}{)}}

\newcommand{\remlab}[1]{\label{rem:#1}}
\newcommand{\remref}[1]{\HLink{Remark}{rem:#1}}%

\providecommand{\lemlab}[1]{\label{xlemma:#1}}
\renewcommand{\lemlab}[1]{\label{xlemma:#1}}
\newcommand{\lemref}[1]{\HLink{Lemma}{xlemma:#1}}%

\newcommand{\clmlab}[1]{\label{claim:#1}}
\newcommand{\clmref}[1]{\HLink{Claim}{claim:#1}}%

\newcommand{\problab}[1]{\label{problem:#1}}
\newcommand{\probref}[1]{\HLink{Problem}{problem:#1}}%

\newcommand{\thmlab}[1]{{\label{theo:#1}}}
\newcommand{\thmref}[1]{\HLink{Theorem}{theo:#1}}

\newcommand{\seclab}[1]{\label{sec:#1}}
\newcommand{\secref}[1]{\HLink{Section}{sec:#1}}

\newcommand{\permut}[1]{\left\langle {#1} \right\rangle}
\newcommand{\DotProd}[2]{\permut{{#1},{#2}}}

\newcommand{\nnY}[2]{\Math{\mathrm{nn}}\pth{#1, #2}}

\providecommand{\eqlab}[1]{}%
\renewcommand{\eqlab}[1]{\label{equation:#1}}
\newcommand{\Eqref}[1]{\HLinkSuffix{Eq.~(}{equation:#1}{)}}

\newcommand{\vspan}{\mathsf{span}}

\InNotJVer{%
   \newcommand{\myqedsymbol}{\rule{2mm}{2mm}}
   \theoremheaderfont{\em}%
   \theorembodyfont{\upshape}%
   \theoremstyle{nonumberplain}%
   \theoremseparator{}%
   \theoremsymbol{\myqedsymbol}%
   \newtheorem{proof}{Proof:}%
}


%
%
%
%

\newcommand{\niter}{\Math{M}}

\newcommand{\Line}{\ell}%

\newcommand{\Term}[1]{\textsf{#1}}%

\DefineNamedColor{named}{RedViolet} {cmyk}{0.07,0.90,0,0.34}

\newcommand{\Mat}{\Math{M}}%
\newcommand{\MatA}{\Math{M'}}%


\definecolor{blue25}{rgb}{0, 0, 11}%
\newcommand{\emphic}[2]{%
   \textcolor{blue25}{%
      \textbf{\emph{#1}}}%
   \index{#2}}

\newcommand{\emphi}[1]{\emphic{#1}{#1}}

\newcommand{\dOneSideH}[2]{\Math{d}\pth{#1 \rightarrow #2}}%
\newcommand{\distH}[2]{d_H\pth{#1, #2}}%

\newcommand{\diamX}[1]{\mathrm{diam}\pth{#1}}%

\newcommand{\kalg}{\Math{k_{\mathrm{alg}}}}%

\newcommand{\kopt}{\Math{k_{\mathrm{opt}}}}%
\newcommand{\koptY}[2]{\kopt\pth{#1, #2}}%
\newcommand{\koptZ}[3]{\kopt\pth{#1, #2, #3}}%

\renewcommand{\th}{th\xspace}

\newcommand{\atgen}{\symbol{'100}}
\newcommand{\SarielThanks}[1]{\thanks{Department of Computer Science;
      University of Illinois; 201 N. Goodwin Avenue; Urbana, IL,
      61801, USA; {\tt sariel\atgen{}illinois.edu}; {\tt
         \url{http://sarielhp.org/}.} #1}}
\newcommand{\BenThanks}[1]{%
   \thanks{%
      Department of Computer Science;
      University of Texas at Dallas; Richardson, TX 75080, USA; 
      {\tt benjamin.raichel\atgen{}utdallas.edu}; {\tt
         \url{http://utdallas.edu/\string~benjamin.raichel}.} #1}}

\newcommand{\ceil}[1]{\left\lceil {#1} \right\rceil}

\newcommand{\etal}{\textit{et~al.}\xspace}

\newcommand{\Caratheodory}{Carath\'eodory\xspace}


\newcommand{\Pin}{\Math{\PntSet_{\mathrm{in}}}}%
\newcommand{\Pout}{\Math{\PntSet_{\mathrm{out}}}}%

\newcommand{\cardin}[1]{\left| {#1} \right|}%

\newcommand{\VC}{\Term{VC}\xspace}%
\newcommand{\LP}{\Term{LP}\xspace}%
\newcommand{\RangeSpace}{\Math{\mathcal{S}}}%
\newcommand{\hpZ}[3]{%
   \hplane^+\pth{#1, #2, #3}%
}

\newcommand{\Set}[2]{\left\{ #1 \;\middle\vert\; #2 \right\}}
\newcommand{\pth}[1]{\mleft({#1}\mright)}%
\newcommand{\hplane}{\Math{h}}%
\newcommand{\hplaneA}{\Math{z}}%
\newcommand{\CHChar}{{\Math{\mathcal{C}}}}
\newcommand{\CHX}[1]{\CHChar_{#1}}

\newcommand{\distSet}[2]{\Math{d}\pth{#1, #2}}
\newcommand{\norm}[1]{\left\| {#1}  \right\|}
\newcommand{\distY}[2]{\norm{#1 - #2}}

\newcommand{\ApproxQB}[2]{\ensuremath{\pth{ #1, #2}}}

\newcommand{\eVec}[1]{\mathrm{e}_{#1}}

\newcommand{\sspace}{L}%
\newcommand{\dSubSpace}[2]{\ell_{#1}\pth{#2}}%
\newcommand{\dCHY}[2]{d_{#1}\pth{#2}}
\newcommand{\annY}[2]{\mathsf{nn}_{#1}\pth{#2}}

\newcommand{\brc}[1]{\left\{ {#1} \right\}}

\newcommand{\query}{\Math{q}}

\newcommand{\pnt}{\Math{{p}}}%
\newcommand{\pntA}{\Math{{t}}}%
\newcommand{\pntB}{\Math{{y}}}%
\newcommand{\pntC}{\Math{{z}}}%
\newcommand{\pntD}{\Math{{u}}}%

\newcommand{\PntSet}{\Math{P}}%
\newcommand{\PntSetA}{\Math{U}}%

\newcommand{\PntSetC}{\Math{Z}}

\newcommand{\CSet}{\Math{{C}}}%
\newcommand{\DSet}{\Math{{D}}}%

\newcommand{\opnt}{\Math{o}}%

\newcommand{\ds}{\displaystyle}%

\newcommand{\Math}[1]{{{#1}}}%

\newcommand{\OptAprxY}[2]{\Math{\mathrm{opt}}\pth{#1, #2}}%
\newcommand{\OptAprxZ}[3]{\Math{\mathrm{opt}}\pth{#1, #2, #3}}%

\newcommand{\vect}{\Math{v}}%
\newcommand{\seg}{\Math{s}}%

\newcommand{\reach}{\delta}
\newcommand{\reachVal}{8 \eps^{1/3}}
\newcommand{\reachValHalf}{4 \eps^{1/3}}
\newcommand{\diam}{\Math{\Delta}}%

\newcommand{\eps}{\varepsilon}

\newcommand{\ballD}{\mathsf{b}}

\newcommand{\ballY}[2]{\mathrm{ball}\pth{#1, #2}}%

\newcommand{\SphereChar}{\ensuremath{\Math{\mathbb{S}}}}
\newcommand{\Sphere}[1]{{\SphereChar^{(#1)}}}

\newcommand{\obs}{\Math{p}}
\newcommand{\comb}{\Math{x}}

\newcommand{\OptSet}{\Math{P_\mathrm{opt}}}%
\newcommand{\algset}{\Math{T}}

\newcommand{\npoints}{n}

\newcommand{\idist}{\tau}%
\newcommand{\idistA}{\tau'}%

\newcommand{\remove}[1]{}%


\IfFileExists{sariel_computer.sty}{\def\sarielComp{1}}{}
\ifx\sarielComp\undefined%
\newcommand{\SarielComp}[1]{}
\newcommand{\NotSarielComp}[1]{#1}%
 \else
\newcommand{\SarielComp}[1]{#1}%
\newcommand{\NotSarielComp}[1]{}%
\fi

\SarielComp{%
   \ifx\colorMath\undefined%
   \else
       \renewcommand{\Math}[1]{{\textcolor{red}{#1}}}
   \fi
}



%

\InJVer{%
   \acmJournal{TALG}%
   \acmVolume{-1}%
   \acmNumber{-1}%
   \acmArticle{-1}%
   \acmYear{2525}%
   \acmMonth{3}%
   \copyrightyear{2525}%
}


\begin{document}

\title{Sparse Approximation via Generating Point Sets%
   \InNotJVer{%
      \thanks{A preliminary version of this paper appeared in SODA 16
         \cite{bhr-sagps-16}. The full version of the paper is also
         available on the arxiv \cite{bhr-sagps-15}.%
      }%
   }%
}

\InJVer{%
   \author{Avrim Blum}
   \affiliation{%
      \institution{Carnegie Mellon University}
      \department{Department of Computer Science}
      \city{Pittsburgh}
      \state{PA}
      \postcode{15213-3891}
      \country{USA}%
   }%
   \author{Sariel Har-Peled}%
   \affiliation{%
      \institution{University of Illinois, Urbana-Champaign}%
      \department{Department of Computer Science}%
      \streetaddress{201 N. Goodwin Avenue}
      \city{Urbana}%
      \state{IL}%
      \postcode{61801}
      \country{USA}%
   }
   \author{Benjamin Raichel}%
   \affiliation{%
      \institution{University of Texas at Dallas}
      \department{Department of Computer Science}
      \city{Richardson}
      \state{TX}
      \postcode{75080}
      \country{USA}
   }
}

\InNotJVer{%
   \author{%
      Avrim Blum%
      \thanks{$\Bigl.$%
         Department of Computer Science, Carnegie Mellon University;
         avrim@cs.cmu.edu.  Work was conducted while on sabbatical at
         the University of Illinois.  Work on this paper was partially
         supported by NSF award CCF-1415460, and CCF-1525971.}%
      \and%
      Sariel Har-Peled\SarielThanks{Work on this paper was partially
         supported by a %
         NSF AF awards %
         CCF-1421231, and 
         CCF-1217462.  
      }%
      \and%
      Benjamin Raichel\BenThanks{%
         Work on this paper was partially supported by the University
         of Illinois Graduate College Dissertation Completion
         Fellowship.  }%
   } }

\date{\today}%

\ifx\JVER\undefined%
\else
\begin{CCSXML}
    <ccs2012>%
    <concept>
    <concept_id>10003752.10003809.10003636</concept_id>
    <concept_desc>Theory of computation~Approximation algorithms        analysis</concept_desc>
        <concept_significance>300</concept_significance> </concept>
        <concept>
        <concept_id>10010147.10010257.10010258.10010260</concept_id>
        <concept_desc>Computing methodologies~Unsupervised
        learning</concept_desc>
        <concept_significance>300</concept_significance> </concept>
        </ccs2012>
\end{CCSXML}

\ccsdesc[300]{Theory of computation~Approximation algorithms analysis}%
\ccsdesc[300]{Computing methodologies~Unsupervised learning}%
\fi


\InJVer{%
   \thanks{%
      Work by A.B. was conducted while on sabbatical at the University
      of Illinois, and was partially supported by NSF award
      CCF-1415460, and CCF-1525971. %
      Work by S.H. was partially supported by NSF AF awards
      CCF-1421231, and CCF-1217462. %
      Work bu B.R.  was partially supported by the University of
      Illinois Graduate College Dissertation Completion Fellowship.
   }%
   \thanks{A preliminary version of this paper appeared in SODA 16
      \cite{bhr-sagps-16}. The full version of the paper is also
      available on the arxiv \cite{bhr-sagps-15}. %
   }%
}%

\maketitle

\begin{abstract}
    For a set $\PntSet$ of $\npoints$ points in the unit ball
    $\ballD \subseteq \Re^d$, consider the problem of finding a small
    subset $\algset \subseteq \PntSet$ such that its convex-hull
    $\eps$-approximates the convex-hull of the original set.
    Specifically, the Hausdorff distance between the convex hull of
    $\algset$ and the convex hull of $\PntSet$ should be at most
    $\eps$.  We present an efficient algorithm to compute such an
    $\eps'$-approximation of size $\kalg$, where $\eps'$ is a function
    of $\eps$, and $\kalg$ is a function of the minimum size $\kopt$
    of such an $\eps$-approximation. Surprisingly, there is no
    dependence on the dimension $d$ in either of the
    bounds. Furthermore, every point of $\PntSet$ can be
    $\eps$-approximated by a convex-combination of points of $\algset$
    that is $O(1/\eps^2)$-sparse.

    Our result can be viewed as a method for sparse, convex
    autoencoding: approximately representing the data in a compact way
    using sparse combinations of a small subset $\algset$ of the
    original data. The new algorithm can be kernelized, and it
    preserves sparsity in the original input.
\end{abstract}

\section{Introduction}
 
\paragraph*{Sparse approximation and coresets.}

Let $\PntSet$ be a set of $\npoints$ points (observations) in the unit
ball $\ballD \subseteq \Re^d$, and let $\CHX{\PntSet}$ denote the
convex-hull of $\PntSet$. Consider the problem of finding a small
$\eps$-coreset $\algset \subseteq \PntSet$ for projection width; that
is, given any line $\Line$ in $\Re^d$, consider the projections of
$\CHX{\algset}$ and $\CHX{\PntSet}$ onto the line $\Line$ -- these are
two intervals $I_\algset \subseteq I_\PntSet$, and we require that
$I_\PntSet \subseteq (1+\eps)I_\algset$. Such coresets have size
$O\pth{1/\eps^{(d-1)/2}}$, and lead to numerous efficient
approximation algorithms in low-dimensions, see \cite{ahv-gavc-05}.
In particular, such an $\eps$-coreset guarantees that the Hausdorff
distance between $\CHX{\algset}$ and $\CHX{\PntSet}$ is at most
$\eps$.

While such coresets can have size
$\Omega( \allowbreak1/\eps^{(d-1)/2})$ in the worst case, data may
have structure allowing much smaller coresets to exist even in high
dimensional spaces.  For example, consider a dataset $\PntSet$ in
which all points are $\eps$-close to one of $k$ different lines.  Then
taking the extreme dataset points associated with each line results in
$2k$ points, such that every $\obs \in \PntSet$ is $2\eps$-close to
the convex hull of those points.  More generally, the union of any two
datasets which have good approximations of sizes $k$ and $k'$,
respectively, has one of size at most $k+k'$.  Thus, it is natural to
ask whether one can approximate the smallest such coreset, in terms of
both its size and approximation quality.

\paragraph*{The problem in matrix form.} 

Given a collection $\PntSet$ of $\npoints$ points (observations) in
the unit ball $\ballD \subseteq \Re^d$, viewed as column vectors, find
a $d \times k$ matrix $\Mat$ such that each $\obs \in \PntSet$ can be
approximately reconstructed as a {\em sparse, convex combination} of
the columns of $\Mat$.  That is, for each $\obs \in \PntSet$ there
exists a sparse non-negative vector $\comb$ whose entries sum to one
such that $\obs \approx \Mat\comb$.  This problem is trivial if we
allow $k=\npoints$: simply make each data point $\obs \in \PntSet$
into a column of $\Mat$, allowing the $i$\th data point to be
perfectly reconstructed using $\comb=e_i$, where $e_i$ is the $i$\th
vector in the standard basis.  The goal is to do so using
$k \ll \npoints$, so that $\Mat$ and the $\comb$'s can be viewed as an
(approximate) compressed representation of the $\obs$'s.

\begin{figure*}[t]%
    \centerline{%
       \begin{tabular}{|l|c|c|l|}%
         \hline
         Technique%
         &%
           $\eps' \equiv \distH{\CHX{\PntSet}}{\CHX{\algset}}$
         &%
           $\kalg\equiv \cardin{\algset}\Bigl.$ 
         &%
           Result\\
         \hline%
         \hline%
         $\eps$-nets%
         &%
           $\leq \eps$%
         &%
           $\Bigl. O( d \kopt \log \kopt )$
         &%
           \lemref{v:c:argument}%
           \qquad
           \qquad
         \\%
         \hline%
         Greedy set cover%
         &%
           $\leq (1+\delta)\eps$%
         &%
           $\Bigl. O\pth{\bigl. \pth{\kopt/ (\eps\delta)^{2}} \log n }$%
         &%
           \lemref{greedy}%
         \\%
         \hline%
         Greedy clustering\qquad\qquad%
         &%
           $\leq \reachVal + \eps$
         &%
           $\Bigl.O\pth{\kopt/\eps^{2/3}}$
         &%
           \begin{minipage}{3.8cm}
               \smallskip%
               \thmref{main}\\%
               No dependency on $d$ or $n$%
               \smallskip%
           \end{minipage}
         \\%
         \hline
       \end{tabular}%
    }
    \caption{Summary of our results: Given a set $\PntSet$ contained
       in the unit ball of $\Re^d$, such that there is a subset
       $\OptSet \subseteq \PntSet$ of size $\kopt$, and
       $\distH{\CHX{\PntSet}}{\CHX{\OptSet}} \leq \eps$, the above
       results compute an approximate set
       $\algset \subseteq \PntSet$.  Note, that any point in
       $\PntSet$ has an $O(1/\eps^2)$-sparse $(\eps+\eps')$-approximation
       using $\algset$, because of the underlying sparsity -- see
       \lemref{n:n}.  }
    \figlab{our:results}
\end{figure*}

\paragraph*{Input assumption.}
We are given a set $\PntSet$ of $n$ points in $\Re^d$ all with norm
at most one.  Suppose that there \emph{exists} a
$d \times \kopt$ matrix $\Mat$, such that
\smallskip%
\begin{compactenum}[\;\;(A)]
    \item  each column of $\Mat$ is a convex combination of the observations
    $\obs$, and
    \item each $\obs \in \PntSet$ can be $\eps$-approximately
    reconstructed as a convex combination of the columns of $\Mat$:
    that is, for each $\obs \in \PntSet$ there exists a non-negative
    vector $\comb$ whose entries sum to one such that
    $\distY{\obs}{\Mat\comb} \leq \eps$.
\end{compactenum}
\smallskip%
Stated geometrically, the assumption is that the input $\PntSet$ is
contained in the unit ball $\ballD$ (centered at the origin), and
there exists a set $\OptSet \subseteq \CHX{\PntSet}$, of size
$\kopt$, such that for any point $\obs \in \PntSet$, we have that
$\obs$ is $\eps$-close to $\CHX{\OptSet}$, where $\CHX{\OptSet}$
denotes the convex-hull of $\OptSet$.  Formally, being
\emph{$\eps$-close} means that the distance of $\obs$ to the set
$\CHX{\OptSet}$ is at most $\eps$.

\paragraph*{Our results.}
We present efficient algorithms for computing a $d \times \kalg$
matrix $\MatA$, consisting of $\kalg$ points of $\PntSet$, such that
each $\obs \in \PntSet$ can be $\eps'$-approximately reconstructed as
a sparse convex combination of the columns of $\MatA$, where $\kalg$
and $\eps'$ are not too large, see \figref{our:results} for details.
Here, \emph{sparse} means that only relatively few of the columns of
$\MatA$ would be used to represent (approximately) each point of
$data$.

Stated in geometric terms, the algorithm computes a set $\algset$ of
$\kalg$ points (these will be points from $\PntSet$) such that every
point in $\PntSet$ is $\eps'$-close to the convex hull of $\algset$
and moreover can be approximately reconstructed using a sparse convex
combination of $\algset$.

The reader may notice that sparsity is not mentioned in the assumption
about $\OptSet$ ($\equiv\Mat$) and yet appears in the conclusion about
$\algset$ ($\equiv\MatA$).  This is because convex combinations have
the property that sparsity can be achieved almost for free, at the
expense of a small amount of reconstruction error (see \lemref{n:n}).
This is to some extent the same reason that a large margin separator
can be represented using a small number of support vectors.


\paragraph*{Related work.}
In comparison with the recent provable algorithms for autoencoding of
Arora \etal \cite{agm-nalio-14}, our result does not require any
distributional assumptions on the $\comb$'s or $\obs$'s, e.g., that
the $\obs \in \PntSet$ were produced by choosing $\comb$ from a
particular distribution and then computing $\Mat\comb$ and adding
random noise.  It also does not require that the columns of $\Mat$ be
incoherent (nearly orthogonal).  However, we do require that the
columns of $\Mat$ be convex combinations of the points
$\obs \in \PntSet$ and that they can approximately reconstruct the
$\obs \in \PntSet$ via convex combinations, so our results are
incomparable to those of Arora \etal \cite{agm-nalio-14}.  Work on
related encoding or dictionary learning problems in the full rank case
has been done by Spielman \etal \cite{sww-ersud-12}, and efficient
algorithms for finding minimal and sparse Boolean representations
under anchor-set assumptions were given by Balcan \etal
\cite{bbv-erlla15}.

\subsection{The results in detail}

Our results are summarized in \figref{our:results}.

\smallskip%
\begin{compactenum}[(A)]
    \item \textbf{Sparse nearest-neighbor in high dimensions.} %
    For a set of points $\PntSet$ in the unit ball
    $\ballD \subseteq \Re^d$ and any point of
    $\pnt \in \CHX{\PntSet}$, one can find a point
    $\pnt' \in \CHX{\PntSet}$ that is the convex combination of
    $O(1/\eps^2)$ points of $\PntSet$, such that
    $\distY{\pnt}{\pnt'} \leq \eps$.  This is of course well known by
    now \cite{c-csgaf-10}, and we describe (for the sake of
    completeness) the surprisingly simple iterative algorithm (which
    is similar to the Perceptron algorithm) to compute such a
    representation in \secref{sparse:approx:ch}.
    This sparse representation is sometimes referred to as an
    approximate \Caratheodory theorem \cite{b-anedb-15}, and it also
    follows from the analysis of the Perceptron algorithm
    \cite{n-ocpp-62} -- see \remref{preceptron}.

    \smallskip
    \item \textbf{Geometric hitting set.}  %
    Our problem can be interpreted as (a somewhat convoluted)
    geometric hitting set problem. In particular, one can apply the
    Clarkson \cite{c-apca-93} polytope approximation algorithm to this
    problem, thus yielding an $O( d \log \kopt)$ approximation. For
    the sake of completeness, we describe this in detail in
    \secref{v:c:approx}. (Since $d$ might be large, this approximation
    is somewhat less attractive.)

    \smallskip
    \item \textbf{The greedy approach.} %
    A natural approach is to try and solve the problem using the
    greedy algorithm. Here, this requires some work, and the resulting
    algorithm is a combination of the algorithm from (A) with greedy
    set cover for the ranges defined in (B).  We initialize an
    instance of the algorithm from (A) for each point
    $\obs \in \PntSet$ whose job is to either find a hyperplane
    through $\obs$ separating it from $\PntSet \setminus\{\obs\}$ by a
    large margin or else to approximate $\obs$ as a combination of a
    few support-vectors in $\PntSet \setminus \{\obs\}$.  At each
    step, we find the point $\obs' \in \PntSet$ that causes as many of
    these algorithms to perform an update as possible, and add it into
    our set $\algset$.  The key issue is to prove that the procedure
    halts after a limited number of steps.  This algorithm is
    described in \secref{greedy:h:set}.

    \smallskip
    \item \textbf{Using greedy clustering.} %
    The second algorithm, and our main contribution, is more similar
    in spirit to the Gonzalez algorithm for $k$-center clustering:
    Repeatedly find the point $\obs \in \PntSet$ that is farthest from
    the convex hull of the points of $\algset$ and then add it into
    $\algset$ if this distance is greater than some threshold (a
    similar idea was used for subspace approximation \cite[Lemma
    5.2]{hv-hdsfl-04}).  The key issue here is to prove that some
    measure of significant progress is made each time a new point is
    added.  Somewhat surprisingly, after $O(\kopt/\eps^{2/3})$
    iterations, the resulting set is an
    $O\pth{\eps^{1/3}}$-approximation to the original set of
    points. Note, that unlike the other results mentioned above, there
    is no dependence on the dimension or the input size.

\end{compactenum}
\medskip%
An additional property of all the above algorithms is that the points
$\algset$ found will be actual dataset points and the algorithms only
require dot-product access to the data.  This means that the
algorithms can be kernelized.  Additionally, much as with CUR
decompositions of matrices, since the points $\algset$ are data
points, they will preserve sparsity if the dataset $\PntSet$ was
sparse.

\section{Preliminaries}

For a set $X\subseteq \Re^d$, $\CHX{X}$ denotes the \emph{convex hull}
of $X$.  For two sets $\PntSet, \PntSet' \subseteq \Re^d $, we denote
by
\begin{math}
    \distSet{\PntSet}{\PntSet'}%
    =%
    \min_{\pnt \in \PntSet} \min_{\pnt' \in \PntSet'}
    \distY{\pnt}{\pnt'}
\end{math}
the \emph{distance} between $\PntSet$ and $\PntSet'$. For a point
$\query \in \Re^d$, its distance to the set $\PntSet$ is
\begin{math}
    \distSet{\query}{\PntSet} = \distSet{\brc{\query}}{\PntSet},
\end{math}
and its \emphi{projection} or \emphi{nearest neighbor} in $\PntSet$ is
the point
\begin{math}
    \nnY{\query}{\PntSet}%
    =%
    \arg \min_{\pnt \in \PntSet} \distY{\query}{\pnt}.
\end{math}

\subsection{Sparse convex-approximation: %
   Problem statement and background}
   
For a set $Y$ in $\Re^d$, its \emph{one sided Hausdorff distance} from
$X$ is $\dOneSideH{Y}{X} = \max_{y \in Y} \distSet{y}{X}$. 

\begin{defn}
    \deflab{one_sided_approx}%
    Consider two sets $\Pin, \Pout \subseteq \Re^d$.
    A set $\PntSetA \subseteq \CHX{\Pout}$ is a
    \emphi{$\delta$-approximation} to $\Pin$ from $\Pout$ if
    \begin{math}
        \dOneSideH{\CHX{\Pin}}{\bigl.\CHX{\PntSetA}} \leq \delta.
    \end{math}
    In words, every point of $\CHX{\Pin}$ is within distance 
    $\delta$ from a point of $\CHX{\PntSetA}$. In the \emphi{discrete
       $\delta$-approximation} version, we require that
    $\PntSetA \subseteq {\Pout}$.
    We use $\OptAprxZ{\Pin}{\Pout}{\delta}$ to denote any minimum
    cardinality discrete $\delta$-approximation to $\Pin$ from $\Pout$,
    and
    $\koptZ{\Pin}{\Pout}{\delta} = \cardin{
       \OptAprxZ{\Pin}{\Pout}{\delta} }$
    to denote its size.  We drop the phrase ``from $\Pout$'' when it
    is clear from the context.
\end{defn}

\begin{problem}
    \problab{in_and_out}%
    Given sets $\Pin,\Pout \subseteq \Re^d$, compute (or approximate)
    $\OptAprxZ{\Pin}{\Pout}{\delta}$.
\end{problem}

For the majority of the paper we focus on the natural special case
when $\PntSet = \Pin = \Pout$.  The \emphi{Hausdorff distance} between
sets $X$ and $Y$ is defined as
\begin{math}
    \distH{X}{Y}%
    =%
    \max\bigl( \dOneSideH{Y}{X}, \dOneSideH{X}{Y} \bigr).
\end{math}

\begin{lemma}
    \lemlab{useful}%
    \begin{inparaenum}[(i)]
        \item Let $\CSet$ be a convex-set in $\Re^d$, then the
        function $f(\pnt) = \distSet{\pnt}{\CSet}$ is convex, where
        $\pnt \in \Re^d$.

        \item A convex-function $f$, over a convex bounded domain
        $\DSet \subseteq \Re^d$, attains its maximum in a boundary
        point of $\DSet$.

        \item For bounded point sets
        $\PntSetA, \PntSet \subseteq \Re^d$, such that
        $\PntSetA \subseteq \CHX{\PntSet}$, we have
        $\distH{\CHX{\PntSetA}}{\CHX{\PntSet}}
        =\dOneSideH{\PntSet}{\CHX{\PntSetA}}$.
    \end{inparaenum}
\end{lemma}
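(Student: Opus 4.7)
The plan is to treat the three parts largely independently, but with part (i) serving as the main lever for part (iii) via Jensen's inequality, so part (ii), despite sitting in the middle of the statement, is not on the critical path for (iii).

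For (i), I would fix $\pnt,\pntA \in \Re^d$ and $\lambda \in [0,1]$, pick nearest points $\pnt^\ast = \nnY{\pnt}{\CSet}$ and $\pntA^\ast = \nnY{\pntA}{\CSet}$, and note that $\lambda \pnt^\ast + (1-\lambda)\pntA^\ast \in \CSet$ by convexity of $\CSet$. Pulling the convex combination through the norm and applying the triangle inequality then gives $f\pth{\lambda \pnt + (1-\lambda)\pntA} \leq \lambda f(\pnt) + (1-\lambda) f(\pntA)$. For (ii), I would run the standard secant argument: take any point $\pnt^\ast$ achieving the maximum; if it already lies on the boundary we are done, otherwise any line through $\pnt^\ast$ meets $\partial \DSet$ in two points $a,b$ with $\pnt^\ast = \lambda a + (1-\lambda)b$ for some $\lambda \in (0,1)$, and convexity yields $f(\pnt^\ast) \leq \max\pth{f(a),f(b)}$, so the maximum is attained on $\partial \DSet$ as well.

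For (iii), the containment $\PntSetA \subseteq \CHX{\PntSet}$ immediately gives $\CHX{\PntSetA} \subseteq \CHX{\PntSet}$, so $\dOneSideH{\CHX{\PntSetA}}{\CHX{\PntSet}} = 0$ and the Hausdorff distance collapses to $\dOneSideH{\CHX{\PntSet}}{\CHX{\PntSetA}}$. It remains to show this quantity equals $\dOneSideH{\PntSet}{\CHX{\PntSetA}}$. One direction is free since $\PntSet \subseteq \CHX{\PntSet}$. For the reverse, I would take an arbitrary $\query \in \CHX{\PntSet}$, invoke Carath\'eodory to express $\query = \sum_i \alpha_i \pnt_i$ as a finite convex combination of points of $\PntSet$, and apply Jensen's inequality to the convex function $\distSet{\cdot}{\CHX{\PntSetA}}$ from part (i): this yields $\distSet{\query}{\CHX{\PntSetA}} \leq \sum_i \alpha_i \distSet{\pnt_i}{\CHX{\PntSetA}} \leq \dOneSideH{\PntSet}{\CHX{\PntSetA}}$, and taking the supremum over $\query$ closes the loop. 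I do not anticipate a real obstacle; the main conceptual point is that Jensen on part (i) sidesteps any need for extreme-point or compactness machinery, and Carath\'eodory keeps the convex combinations finite.
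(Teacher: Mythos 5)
Your proof is correct, and it departs from the paper's in an interesting way on part (iii). For (i) your argument is essentially identical to the paper's (nearest points, convexity of $\CSet$, triangle inequality). For (ii) you use the standard secant argument through a line meeting $\partial \DSet$ at two points; the paper instead writes an interior point as a convex combination of extremal points of $\DSet$ and applies Jensen. Both are fine, but note that the paper's version of (ii) actually establishes something stronger than the stated claim --- that the maximum is attained at an \emph{extremal} point, not merely a boundary point --- and the paper then silently leans on that stronger fact in its proof of (iii), since it asserts the max over $\CHX{\PntSet}$ is attained at a point of $\PntSet$ (the extremal points of $\CHX{\PntSet}$ are contained in $\PntSet$, but arbitrary boundary points need not be). Your proof of (iii) avoids this subtlety entirely: rather than routing through (ii), you take any $\query \in \CHX{\PntSet}$, express it by Carath\'eodory as a finite convex combination of points of $\PntSet$, and apply Jensen to the convex function from (i). This is cleaner and also more robust, since it works with suprema and does not require the maximum to actually be attained. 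The one thing it does not buy you is the extremal-point strengthening of (ii), but you never need it.
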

\begin{proof}
    This is all well known, and we include the proof for the sake of
    completeness.
    
    (i) Consider any two points $\pnt, \pntB$ in $\Re^d$, and let
    $\pnt' = \nnY{\pnt}{\CSet}$ and $\pntB' = \nnY{\pntB}{\CSet}$. For
    any $t \in [0,1]$, we have by convexity that
    \begin{math}
        \pntC = t \pnt + (1-t)\pntB  \in \pnt \pntB
    \end{math}
    and 
    \begin{math}
        \pntC' = t \pnt' + (1-t)\pntB' \in \CSet.
    \end{math}
    Therefore, by the triangle inequality, we have
    \begin{align*}
      f(\pntC)%
      &= %
        f\pth{ t \pnt + (1-t)\pntB \bigl.}%
        \leq %
        \distY{\pntC}{\pntC'}%
        =%
        \distY{ \bigl. \pth{ t \pnt + (1-t)\pntB \bigl.}%
        }{ \pth{ t \pnt' + (1-t)\pntB' \bigl.}}%
      \\&%
      =%
      \norm{ \bigl. t (\pnt-\pnt') + (1-t)(\pntB - \pntB')}%
      \leq%
      \norm{ \bigl. t (\pnt-\pnt')} + \norm{\bigl.(1-t)(\pntB -
      \pntB')}%
      \\&%
      =%
      t \norm{ \bigl. \pnt-\pnt'} + (1-t)\norm{\bigl.\pntB -
      \pntB'}%
      =%
      t f(\pnt) + (1-t)f(\pntB).
    \end{align*}

    (ii) If $\pnt$ is the interior of $\DSet$ then there are extremal
    points $\pnt_1, \ldots, \pnt_d$ of $\DSet$, and constants
    $\alpha_1, \ldots, \alpha_d \in [0,1]$, such that
    $\sum_i \alpha_i = 1$ and $\pnt =\sum_i \alpha_i \pnt_i$. As such,
    by convexity, we have
    \begin{math}
        f(\pnt)%
        =%
        f(\sum_i \alpha_i \pnt_i)%
        \leq%
        \sum_i \alpha_i f( \pnt_i) \leq%
        \max_i f(\pnt_i).
    \end{math}

    (iii) By (i), the function $\distSet{\pnt}{\CHX{\PntSetA}}$ is
    convex. By (ii), its maximum over $\CHX{\PntSet}$ is attained at a
    point of $\PntSet$. We thus have that
    \begin{align*}%
      \distH{\bigl.\CHX{\PntSetA}}{\CHX{\PntSet}}%
      &=%
        \max\pth{ \bigl.%
        \dOneSideH{\CHX{\PntSetA}}{\CHX{\PntSet}}, %
        \dOneSideH{\CHX{\PntSet}}{\CHX{\PntSetA}} } %
        =%
        \dOneSideH{\bigl.\CHX{\PntSet}}{\CHX{\PntSetA}} %
        =%
        \max_{\pnt \in \CHX{\PntSet}} \distSet{\pnt}{\CHX{\PntSetA}}
        =%
        \max_{\pnt \in \PntSet} \distSet{\pnt}{\CHX{\PntSetA}}%
      \\&%
      =%
      \dOneSideH{\PntSet}{\CHX{\PntSetA}}.
    \end{align*}
\end{proof}%

\begin{defn}
    Consider any set $\PntSet \subseteq \Re^d$.  A set
    $\PntSetA \subseteq \CHX{\PntSet}$ is a
    \emphi{$\delta$-approximation} to $\PntSet$ if
    \begin{math}
        \distH{\bigl.\CHX{\PntSetA}}{\CHX{\PntSet}} \leq
        \delta. 
    \end{math}
    By the above lemma, this is equivalent to every point of $\PntSet$
    being in distance at most $\delta$ from a point of
    $\CHX{\PntSetA}$.  
    In the \emphi{discrete $\delta$-approximation}
    version, we require that $\PntSetA \subseteq {\PntSet}$.  
    Let $\OptAprxY{\PntSet}{\delta}$ be any minimum cardinality
    $\delta$-approximation to $\PntSet$, and let
    $ \koptY{\PntSet}{\delta} = \cardin{ \OptAprxY{\PntSet}{\delta} }$
    denote its size.
\end{defn}

\begin{problem}
    \problab{in_equals_out} Given a set $\PntSet \subseteq \Re^d$ and
    value $\delta$, compute (or approximate)
    $\OptAprxY{\PntSet}{\delta}$.
\end{problem}

\begin{example:unnumbered}
    Consider a unit radius sphere $\Sphere{d-1}$ in $\Re^d$ centered
    at the origin, and let $\PntSet$ be a $\delta'$-packing on
    $\Sphere{d-1}$ (i.e., every point in $\Sphere{d-1}$ is at distance
    at most $\delta'$ from a point of $\PntSet$, and any two points of
    $\PntSet$ are at distance at least $\delta'$ from each other). It
    is easy to verify that such a $\delta'$-packing has size
    $\Theta\pth{ 1/(\delta')^{d-1}}$.  Furthermore, for any
    $\delta > 0$, and an appropriate absolute constant $c$
    (independent of the dimension or $\delta$), setting
    $\delta' = c \sqrt{\delta}$, we have the property that for any
    point $\pnt \in \PntSet$,
    \begin{math}
        \distSet{\pnt}{\CHX{\PntSet \setminus \brc{\pnt}}} > \delta.
    \end{math}
    That is, any $\delta$-approximation to $\PntSet$ requires
    $\Omega\pth{ 1/\delta^{(d-1)/2}}$ points.

    On the other hand, let
    $\Pout = \Set{ \pm d \eVec{i}}{i=1,\ldots,d \bigr.}$, where
    $\eVec{i}$ denotes the $i$\th orthonormal vector, having zero in
    all coordinates except for the $i$\th coordinate where it is
    $1$. Clearly, $\Sphere{d-1} \subseteq \CHX{\Pout}$, and as such
    $\koptZ{\PntSet}{\Pout}{\delta} \leq \cardin{\Pout} = 2d$, with
    equality for $\delta=0$.

    Throughout this paper we require that $\Pout$ be contained in the
    unit ball, disallowing this latter type of ``trivial'' solution,
    and furthermore having the property that a successful
    approximation also yields a {\em sparse} solution essentially for
    free, as shown next in \lemref{n:n}.
\end{example:unnumbered}

\subsection{Computing the approximate distance %
   to the convex hull}
\seclab{sparse:approx:ch}%

The following is well known, and is included for the sake of
completeness, see \cite{hkmr-seps-15}.  It also follows readily from
the Preceptron algorithm (see \remref{preceptron} below).

\begin{lemma}
    \lemlab{n:n}%
    Let $\PntSet \subseteq \Re^d$ be a point set, $\eps > 0$ be a
    parameter, and let $\query \in \Re^d$ be a given query
    point. Then, one can compute, in
    $O\pth{\cardin{\PntSet} d/\eps^2}$ time, a point
    $\pntA\in \CHX{\PntSet}$, such that
    \begin{math}
        \distY{\query}{\pntA} \leq \distSet{\query}{\CHX{\PntSet}} +
        \eps \diam,
    \end{math}
    where $\diam = \diamX{\PntSet}$. Furthermore, $\pntA$ is a convex
    combination of $O(1/\eps^2)$ points of $\PntSet$.
\end{lemma}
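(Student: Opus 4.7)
The plan is to run a Frank–Wolfe / Perceptron-style iterative descent on the convex function $f(x) = \norm{x - \query}^2$ over the compact domain $\CHX{\PntSet}$; this is exactly the iterative algorithm alluded to in \remref{preceptron}. Initialize $\pntA_1$ to be any point of $\PntSet$. At iteration $i$, find
\[
\pnt_i \in \arg\min_{\pnt \in \PntSet} \DotProd{\pntA_i - \query}{\pnt},
\]
which, since the objective is linear, is also a minimizer over $\CHX{\PntSet}$; this costs $O(\cardin{\PntSet}d)$ time. Then set $\pntA_{i+1} = (1-\alpha_i)\pntA_i + \alpha_i\pnt_i$, with $\alpha_i \in [0,1]$ chosen by line search on the resulting one-variable convex quadratic. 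After $\niter$ iterations $\pntA_\niter$ is by construction a convex combination of at most $\niter$ points of $\PntSet$, giving the claimed sparsity, and the total running time is $O(\cardin{\PntSet}d\niter)$.

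For the error analysis, let $\pntA^* = \nnY{\query}{\CHX{\PntSet}}$, $\opt = \distSet{\query}{\CHX{\PntSet}}$, and $\Phi_i = \norm{\pntA_i - \query}^2 - \opt^2 \ge 0$. Expanding the squared norm along the update direction yields
\[
\norm{\pntA_{i+1} - \query}^2 = \norm{\pntA_i - \query}^2 - 2\alpha_i g_i + \alpha_i^2 \norm{\pnt_i - \pntA_i}^2,
\qquad g_i := \DotProd{\pntA_i - \query}{\pntA_i - \pnt_i}.
\]
Convexity of $f$ gives $\DotProd{\pntA_i - \query}{\pntA_i - \pntA^*} \ge \Phi_i/2$, and the minimizer property of $\pnt_i$ then yields $g_i \ge \Phi_i/2$. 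Combined with $\norm{\pnt_i - \pntA_i}^2 \le \diam^2$ and the (near-)optimal step $\alpha_i = \min\pth{1,\, \Phi_i/(2\diam^2)}$ — which the line search can only improve upon — this produces the one-step progress
\[
\Phi_{i+1} \le \Phi_i - \frac{\Phi_i^2}{4\diam^2}
\]
throughout the regime $\Phi_i \le \diam^2$. The one delicate point is that $\Phi_1$ can exceed $\diam^2$ when $\query$ is far from $\CHX{\PntSet}$; I would handle this by noting that the clipped choice $\alpha_i = 1$ already forces $\Phi_{i+1} \le \diam^2$ after at most one such step, so the remainder of the run obeys the recurrence above.

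The recurrence $\Phi_{i+1} \le \Phi_i(1 - \Phi_i/(4\diam^2))$ is solved by the standard reciprocal induction $1/\Phi_{i+1} \ge 1/\Phi_i + 1/(4\diam^2)$, giving $\Phi_i \le 4\diam^2/i$. Thus for $\niter = \lceil 4/\eps^2 \rceil = O(1/\eps^2)$ we obtain $\Phi_\niter \le \eps^2\diam^2$, whence
\[
\norm{\pntA_\niter - \query} \le \sqrt{\opt^2 + \eps^2\diam^2} \le \opt + \eps\diam,
\]
matching the conclusion of the lemma and simultaneously certifying that $\pntA_\niter$ is written as a convex combination of $\niter = O(1/\eps^2)$ points of $\PntSet$, computed in $O(\cardin{\PntSet}d/\eps^2)$ total time. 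The main obstacle is simply setting up the potential argument cleanly — lower-bounding the linearized gap $g_i$ by $\Phi_i/2$ via convexity plus the greedy step, then unwinding the recurrence — but all the ingredients are standard.
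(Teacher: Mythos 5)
Your proposal is correct. The algorithm itself is the same as the paper's (a Frank--Wolfe / Gilbert step: find the extreme point of $\PntSet$ in the residual direction, move to the best point on the connecting segment), but your analysis is genuinely different. The paper argues geometrically: it bounds the cosine of the angle $\angle \pnt_i \pntA_{i-1} \query$ by $\eps$ whenever the algorithm has not yet converged, deduces $\ell_i \leq (1-\eps^2/2)\ell_{i-1}$ for the straight distance $\ell_i = \distY{\query}{\pntA_i}$, and then has to run a secondary ``halving $\eps_j$'' bookkeeping argument to convert the multiplicative shrink into the clean $O(1/\eps^2)$ bound (a direct count gives $O(\eps^{-2}\log(1/\eps))$). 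You instead track the Frank--Wolfe potential $\Phi_i = \norm{\pntA_i-\query}^2 - \mathrm{opt}^2$, lower-bound the linearized gap $g_i \geq \Phi_i/2$ via convexity of $\norm{\cdot-\query}^2$ plus optimality of $\pnt_i$, and unwind the recurrence $\Phi_{i+1} \leq \Phi_i - \Phi_i^2/(4\diam^2)$ by the standard reciprocal-potential telescoping, which lands directly at $\Phi_i = O(\diam^2/i)$ with no extra epsilon-doubling stage. This cleanly handles the warm-up as well: when $\Phi_i > 2\diam^2$ the $\alpha=1$ step gives $\Phi_{i+1}\leq \Phi_i - 2g_i + \diam^2 \leq \diam^2$, so one step suffices to enter the regime where the recurrence applies, whereas the paper needs a separate discussion of the case $\ell_0 \geq (4/\eps^2)\diam$ and requires $\pntA_0$ to be the nearest point of $\PntSet$ to $\query$ (your argument works from an arbitrary initial point). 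The trade-off is that your route requires writing down the convexity/gap inequality and the reciprocal induction, while the paper's trigonometric argument is arguably more visual; both deliver the same $O(1/\eps^2)$-sparse, $\eps\diam$-additive guarantee in $O(\cardin{\PntSet}d/\eps^2)$ time.
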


\begin{proof}
    The algorithm is iterative, computing a sequence of points
    $\pntA_0, \ldots, \pntA_i$ inside $\CHX{\PntSet}$ that approach
    $\query$. Initially, $\pnt_0 = \pntA_0$ is the closest point of
    $\PntSet$ to $\query$. In the $i$\th iteration, the algorithm
    computes the vector
    \begin{math}
        \vect_i = \query - \pntA_{i-1},
    \end{math}
    and the point $\pnt_i \in \PntSet$ that is extremal in the
    direction of $\vect_i$. Now, the algorithm sets $\pntA_i$ to be
    the closest point to $\query$ on the segment
    $\seg_i = \pntA_{i-1} \pnt_i$, and continues to the next
    iteration, for $\niter = O(1/\eps^2)$ iterations. The algorithm
    returns the point $\pntA^{}_{\niter}$ as the desired answer.

    \InNotJVer{%
    \parpic[r]{%
       \begin{minipage}{3.7cm}%
           \includegraphics{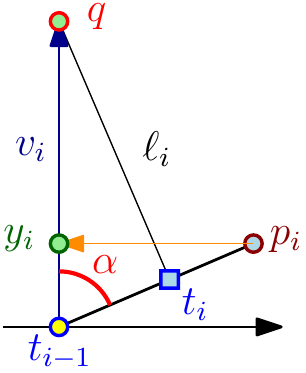}%
           \vspace{-0.8cm}%
           \captionof{figure}{}%
           \figlab{proof:n:n}%
       \end{minipage}%
    }%
    }
    \InJVer{%
       \begin{figure}[h]
           {\includegraphics{figs/step_iteration}}%
           \captionof{figure}{}%
           \figlab{proof:n:n}%
       \end{figure}%
    }

    By induction, the point
    $\pntA_i \in \CHX{ \brc{\pnt^{}_0, \ldots, \pnt^{}_i}}$.
    Furthermore, observe that the distance of the points
    $\pntA_0, \pntA_1, \ldots$ from $\query$ is monotonically
    decreasing. In particular, for all $i > 0$, $\pntA_i$ must fall in
    the middle of the segment $\seg_i$, as otherwise, $\pnt_i$ would
    be closer to $\query$ than $\pnt^{}_0$, a contradiction to the
    definition of $\pnt^{}_0$.
    
    Project the point $\pnt_i$ to the segment $\pntA_{i-1} \query$,
    and let $\pntB_i$ be the projected point. Observe that
    $\distY{\query}{\pntB_i}$ is a lower bound on
    $\distSet{\query}{\CHX{\PntSet}}$. Therefore, if
    \begin{math}
        \distY{\pntB_i}{\pntA_{i-1}} \leq \eps \diam
    \end{math}
    then we are done, as
    \begin{math}
        \distY{\query}{\pntA_{i-1}}%
        \leq%
        \distY{\pntA_{i-1}}{\pntB_i} + \distY{\pntB_{i}}{\query}%
        \leq%
        \eps \diam + \distSet{\query}{\CHX{\PntSet}}.
    \end{math}
    (In particular, one can use this as alternative stopping condition
    for the algorithm, instead of counting iterations.)

    So, let $\alpha$ be the angle $\angle \pnt_i \pntA_{i-1} \query$.
    Observe that as $\pntA_{i-1} \pnt_i \subseteq \CHX{\PntSet}$, it
    follows that
    \begin{math}
        \distY{\pntA_{i-1}}{\pnt_i} \leq \diamX{\PntSet} = \diam.
    \end{math}
    Furthermore,
    \begin{math}%
        \ds%
        \cos \alpha%
        =%
        \frac{\distY{\pntB_i}{\pntA_{i-1}}}{\distY{\pntA_{i-1}}{\pnt_i}}%
        >%
        \frac{\eps \diam}{\diam}%
        =%
        \eps,
    \end{math}
    since $\distY{\pntB_i}{\pntA_{i-1}} > \eps \diam$.  Hence,
    \begin{math}
        \sin \alpha = \sqrt{1-\cos^2 \alpha}%
        \leq %
        \sqrt{1-\eps^2} \leq 1-\eps^2/2.
    \end{math}
    Let $\ell_{i-1} = \distY{\query}{\pntA_{i-1}}$. We have that
    \begin{align*}
        \ell_i%
        =%
        \distY{\query}{\pntA_i}%
        =%
        \distY{\query}{\pntA_{i-1}} \sin \alpha \leq (1-\eps^2/2)
        \ell_{i-1}.
    \end{align*}

    Analyzing the number of iterations required by the algorithm is
    somewhat tedious. If
    \begin{math}
        \ell_0 = \distY{\query}{\pntA_0} \geq (4/\eps^2) \diam
    \end{math}
    then the algorithm would be done in one iteration as otherwise
    $\ell_1 \leq \ell_0 -2\diam$, which is impossible. In particular,
    after $4/\eps^2$ iterations the distance $\ell_i$ shrinks by a
    factor of two, and as such, after $O((1/\eps^2) \log (1/\eps))$
    iterations the algorithm is done.

    One can do somewhat better. By the above, we can assume that
    $\distSet{\query}{\PntSet} = O(\diam / \eps^2)$.  Now, set
    $\eps_j = 1/2^{2+j}$. By the above, after
    $n_0 = O((1/\eps_0^2) \log (1/\eps_0)) =O(1)$ iterations,
    $\ell_{n_0} \leq \distSet{\query}{\CHX{\PntSet}} +
    \diamX{\PntSet}/4$.
    For $j \geq 1$, let $n_j = 4/(\eps_{j})^2$, and observe that,
    after $\nu_j = n_j + \sum_{k=0}^{j-1} n_k$ iterations, we have
    that
    \begin{align*}
        \ell_{\nu_j}%
        \leq%
        \pth{\bigl. \distSet{\query}{\CHX{\PntSet}} + \eps_{j-1}
           \diam}/2%
        \leq%
        \distSet{\query}{\CHX{\PntSet}} + \eps_{j} \diam.
    \end{align*}
    In particular, stopping as soon as $\eps_j \leq \eps$, we have the
    desired guarantee, and the number of iterations needed is
    $\niter = O(1) + \sum_{j=0}^{\ceil{\lg 1/\eps}} 4/\eps_j^2 =
    O(1/\eps^2)$.
\end{proof}%

In our use of \lemref{n:n}, $\PntSet$ and $\query$ will always be
contained in the unit ball, so we can remove the $\diam$ term in the
bound if we wish since $\diam \leq 2$.

\begin{remark}
    \remlab{preceptron}%
    \lemref{n:n} is known, and a variant of it follows readily from a
    result (from 1962) on the convergence of the Perceptron algorithm
    \cite{n-ocpp-62}.  Indeed, consider a set
    $\PntSet \subseteq \Re^d$ and a query point $\query \in \Re^d$.
    Assume that $\query \in \CHX{\PntSet}$, and furthermore that
    $\query$ is the origin (translating space if needed to ensure
    this).  Run the Perceptron algorithm learning a linear classifier
    that passes through the origin and classifies $\PntSet$ as
    positive examples.  Stop the algorithm after $M = 1/\eps^2$
    classification mistakes (since $\query \in \CHX{\PntSet}$, there
    will always be a mistake in $\PntSet$).  Let
    $\pnt^{}_1, \ldots, \pnt^{}_M$ be the sequence of points on which
    mistakes were made and let $w = \pnt_1 + \ldots + \pnt_M$ be the
    resulting hypothesis vector.  By the analysis of \cite{n-ocpp-62},
    we have $\norm{w} \leq \diamX{\PntSet}\sqrt{M}$.  This implies
    that the point $\pnt' = w/M$, which is a convex combination of the
    points $\pnt_1, \ldots, \pnt_M$, has length---and therefore
    distance from $\query$---at most $\eps \diamX{\PntSet}$.

    Thus, we conclude that for any point $\pnt \in \CHX{\PntSet}$, and
    any $\eps \in (0,1)$, there is a point $\pnt' \in \CHX{\PntSetA}$,
    which is a convex combination of $O(1/\eps^2)$ points of
    $\PntSet$, such that
    $\distY{\pnt}{\pnt'} \leq \eps \diamX{\PntSet}$. This is sometimes
    referred to as approximate \Caratheodory theorem
    \cite{b-anedb-15}.

    We described the alternative algorithm (in the proof of
    \lemref{n:n}) because it is more direct and slightly simpler in
    this case.
\end{remark}



\section{Approximations via hitting set algorithms}
\seclab{hitting:sets}

Here we look at two hitting set type algorithms for
\probref{in_and_out}. An
\emphi{$\ApproxQB{\alpha}{\beta}$-approximation} of
$\OptAprxZ{\Pin}{\Pout}{\eps}$ is a set $\PntSetA \subseteq {\Pout}$
such that
\begin{math}
    \dOneSideH{\CHX{\Pin}}{\bigl.\CHX{\PntSetA}} \leq \alpha
\end{math}
and
\begin{math}
    \cardin{\PntSetA} \leq \beta \koptZ{\Pin}{\Pout}{\eps},
\end{math}%
see \defref{one_sided_approx}.

As a warm-up exercise, we first present an
$(\eps,\allowbreak O( d \log \kopt ))$-approximation using
approximation algorithms for hitting sets for set systems with bounded
\VC dimension. Then, we build on that to get a greedy algorithm
providing a
$( (1+\delta)\eps, O((\eps\delta)^{-2} \log n) )$-approximation.

\subsection{Approximation via \VC dimension}
\seclab{v:c:approx}%

\noindent%
\begin{minipage}{0.84\linewidth}
\begin{defn}
    \deflab{shadow}%
    For a set $\PntSet \subseteq \Re^d$ and a direction vector
    $\vect$, let $\pnt$ be the point of $\PntSet$ extreme in the
    direction of $\vect$, and let $\hplane'$ be the hyperplane with
    normal $\vect$ and tangent to $\CHX{\PntSet}$ at $\pnt$.  For a
    parameter $\eps$, let $\hplane$ be the hyperplane formed by
    translating $\hplane'$ distance $\eps$ in the direction $-\vect$.
    The \emphi{$\eps$-shadow} of $\hplane'$ (or $\vect$), is the
    halfspace $\hpZ{\PntSet}{\eps}{\vect}$ bounded by $\hplane$ that
    contains $\pnt$ in its interior. In words, the $\eps$-shadow of
    $\vect$ is the outer supporting halfspace for $\PntSet$ with a
    normal in the direction of $\vect$, translated in by distance
    $\eps$.
\end{defn}
\end{minipage}\hfill
\begin{minipage}{0.16\linewidth}%
    \hfill%
    {\includegraphics{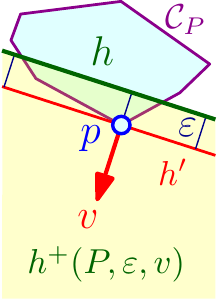}}%
\end{minipage}

\begin{lemma}
    \lemlab{v:c:argument}%
    Given sets $\Pin$ and $\Pout$ in $\Re^d$ with a total of $n$
    points, and a parameter $\eps$, one can compute a
    $\ApproxQB{\eps}{ O( d \log \kopt )\bigr.}$-approximation to the
    optimal discrete set $\OptAprxZ{\Pin}{\Pout}{\eps}$ in polynomial
    time.
\end{lemma}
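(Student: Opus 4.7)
The plan is to cast the problem as a geometric hitting set instance on a range space of bounded \VC dimension, and then apply Clarkson's algorithm~\cite{c-apca-93}.

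Define a range space $\RangeSpace = (\Pout, \mathcal{R})$ whose ranges are indexed by unit direction vectors: for each unit $\vect \in \Re^d$, set $R_\vect = \Pout \cap \hpZ{\Pin}{\eps}{\vect}$. Each range is the intersection of $\Pout$ with a halfspace, so $\RangeSpace$ has \VC dimension at most $d+1$, and by Sauer--Shelah only $O(n^{d+1})$ ranges are combinatorially distinct; they can be enumerated in polynomial time by iterating over supporting halfspaces spanned by $O(d)$-tuples of $\Pout$.

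The heart of the argument is the following equivalence, which I would prove as a lemma: a set $\PntSetA \subseteq \Pout$ is an $\eps$-approximation to $\Pin$ from $\Pout$ if and only if $\PntSetA$ hits every range of $\RangeSpace$. For the forward direction, given an $\eps$-approximation $\PntSetA$ and any unit $\vect$, let $\pnt \in \Pin$ be extremal in direction $\vect$; by \defref{one_sided_approx} there exists $\pntA \in \CHX{\PntSetA}$ with $\distY{\pnt}{\pntA} \leq \eps$, so any vertex $\pntB \in \PntSetA$ maximizing $\DotProd{\cdot}{\vect}$ satisfies $\DotProd{\pntB}{\vect} \geq \DotProd{\pntA}{\vect} \geq \DotProd{\pnt}{\vect}-\eps$, placing $\pntB \in R_\vect$. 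For the reverse direction, if $\PntSetA$ hits every $R_\vect$, then for every unit $\vect$ the support function satisfies $h_{\CHX{\PntSetA}}(\vect) \geq h_{\CHX{\Pin}}(\vect) - \eps$; by the standard correspondence between support functions and set inclusion, this is equivalent to $\CHX{\Pin}$ being contained in the $\eps$-neighborhood of $\CHX{\PntSetA}$, which says every point of $\CHX{\Pin}$ lies within $\eps$ of $\CHX{\PntSetA}$, giving the required one-sided Hausdorff bound (compare with \lemref{useful}(iii), which reduces the check on $\CHX{\Pin}$ to a check on $\Pin$).

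With the equivalence in hand, the minimum hitting set of $\RangeSpace$ has size exactly $\kopt = \koptZ{\Pin}{\Pout}{\eps}$. Invoking Clarkson's $O(\log \kopt)$-approximation hitting-set algorithm on this range space of \VC dimension $O(d)$ produces, in polynomial time, a hitting set $\PntSetA \subseteq \Pout$ of size $O(d \log \kopt) \cdot \kopt$, which by the equivalence is the desired $\ApproxQB{\eps}{O(d \log \kopt)}$-approximation. The main technical point to be careful about is the two-way tightness of the reduction: the combinatorial hitting set must not merely lower bound the geometric optimum but coincide with it. The support-function argument above is precisely what closes this potential gap, and is the only nontrivial ingredient beyond a black-box invocation of \cite{c-apca-93}.
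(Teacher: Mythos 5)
Your proposal is correct and mirrors the paper's proof in its essential structure: you set up the same range space on $\Pout$ indexed by the $\eps$-shadow halfspaces $\hpZ{\Pin}{\eps}{\vect}$, observe the \VC dimension is $O(d)$, establish that hitting sets for this range space coincide with $\eps$-approximations, and then invoke Clarkson's hitting-set algorithm. The one place you argue differently is the ``hitting set $\Rightarrow$ $\eps$-approximation'' direction: the paper proceeds by contradiction, taking a putative bad point $\pnt \in \CHX{\Pin}$ at distance $>\eps$ from $\CHX{\PntSetA}$, letting $\pnt'$ be its nearest-point projection and $\vect = \pnt - \pnt'$, and showing the shadow $\hpZ{\Pin}{\eps}{\vect}$ is entirely cut off from $\CHX{\PntSetA}$ by the supporting hyperplane through $\pnt'$ -- hence that range is unhit. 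You instead argue globally via support functions: hitting every $R_\vect$ gives $h_{\CHX{\PntSetA}}(\vect) \geq h_{\CHX{\Pin}}(\vect) - \eps$ for all unit $\vect$, which is exactly the support-function characterization of $\CHX{\Pin} \subseteq \CHX{\PntSetA} \oplus \eps\ballD$. Both arguments are sound; yours is slightly more abstract and arguably cleaner, while the paper's is more elementary and self-contained (it uses only \lemref{useful} and a direct separating-hyperplane picture, rather than appealing to the support-function inclusion criterion as a black box). The forward direction and the invocation of the $O(d\log\kopt)$-approximate hitting set are the same in both; the only small thing worth being careful about is the closed/open convention for the shadow halfspace so that the extremal point of $\PntSetA$ lands weakly inside the range, but this is a measure-zero technicality.
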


\begin{proof}
    For a direction $\vect$, consider the hyperplane $\hplane'$
    tangent to $\CHX{\Pin}$ at an extremal point
    $\pnt_{\vect} \in \Pin$ in the direction of $\vect$, and its
    $\eps$-shadow $\hplane^+ =\hpZ{\Pin}{\eps}{\vect}$, see
    \figref{s:fig}.%

    \InNotJVer{%
    \parpic[r]{%
       \begin{minipage}{5cm}%
           ~\hfill%
           \begin{minipage}{4.7cm}
               \hfill{\includegraphics[width=0.95\linewidth]{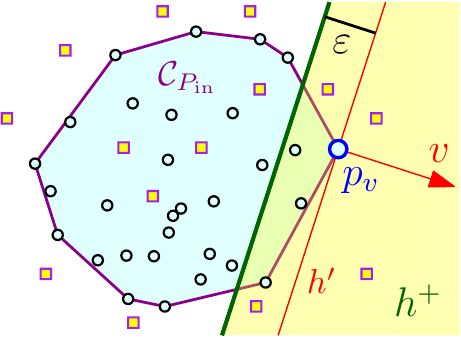}}%
               \captionof{figure}{Circles and squares denote points of
                  $\Pin$ and $\Pout$, respectively.}%
               \figlab{s:fig}%
           \end{minipage}%
       \end{minipage}%
    }%
    }
    \InJVer{%
       \begin{figure}[h]
           \includegraphics{figs/ch_shrink}               
           \captionof{figure}{Circles and squares denote points of
              $\Pin$ and $\Pout$, respectively.}%
           \figlab{s:fig}%
       \end{figure}%
    }%
    
    Clearly, any discrete $\eps$-approximation
    $\PntSetA \subseteq \Pout$ to $\Pin$, must contain at least one
    point of $\Pout \cap \hplane^+$, as otherwise the approximation
    fails for the point $\pnt_{\vect}$ (in particular, if such a
    halfspace has no point in $\Pout$ then there is no
    approximation). Now, consider the set system
    \begin{align*}
        \RangeSpace%
        =%
        \pth{\Bigl.\Pout,%
           \Set{%
              \Pout \cap \hpZ{\Pin}{\eps }{\vect}%
           }{\vect \text{ any unit vector}}}.
    \end{align*}
    This set system has \VC dimension at most $d+1$, and in
    particular, for such a set system one can compute a
    $O( d \log \kopt )$ approximation to its minimum size hitting set,
    which is the desired approximation in this case, see \cite[Section
    6.3]{h-gaa-11}.  We describe the algorithm below, but first we
    verify that this indeed yields the desired approximation.

    \InNotJVer{%
       \parpic[r]{%
          \begin{minipage}{5cm}%
              ~\hfill%
              \begin{minipage}{4.7cm}
                  \hfill{\includegraphics{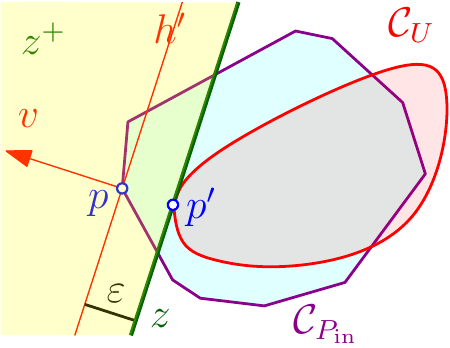}}%
                  \captionof{figure}{}%
                  \figlab{goody}%
              \end{minipage}%
          \end{minipage}%
       }%
    }%
    \InJVer{%
       \begin{figure}[h]
           \includegraphics{figs/goody}%
           \captionof{figure}{}%
           \figlab{goody}%
       \end{figure}
    }%
 
    Consider a hitting set $\PntSetA \subseteq \Pout$ of
    $\RangeSpace$. Let $\pnt$ be any point in $\CHX{ \Pin}$, and let
    $\pnt'$ be the closest point to $\pnt$ in $\CHX{\PntSetA}$. If
    $\distY{\pnt}{\pnt'} \leq \eps $, then we are done. Otherwise,
    consider the vector $\vect = \pnt - \pnt'$.  Let $\hplaneA$ denote
    the hyperplane whose normal is $\vect$ and which passes through
    the point $\pnt'$, and let $\hplaneA^+$ denote the open halfspace
    bounded by $\hplaneA$ and in the direction of $\vect$
    (i.e. containing $\pnt$).  As $\pnt'$ is the closest point to
    $\pnt$ in $\CHX{\PntSetA}$, $\hplaneA^+$ has empty intersection
    with $\CHX{\PntSetA}$.  Moreover,
    $\hpZ{\Pin}{\eps }{\vect}\subsetneq \hplaneA^+$, as the bounding
    hyperplanes of both halfspaces have $\vect$ as a normal, and the
    extreme point of $\CHX{ \Pin}$ in the direction of $\vect$ must be
    $>\eps$ away from $\hplaneA$ (as $\pnt$ is at least this far in
    the direction of $\vect$).  See \figref{goody}.  These two facts
    combined imply
    $\hpZ{\Pin}{\eps }{\vect}\cap \CHX{\PntSetA} = \emptyset$, a
    contradiction as $\hpZ{\Pin}{\eps }{\vect} \cap \Pout$ is a set in
    $\RangeSpace$ that should have been hit.

    As for the algorithm, Clarkson \cite{c-apca-93} described how to
    compute this set via reweighting, but the following technique due
    to Long \cite{l-upsda-01} is easier to describe (we sketch it here
    for the sake of completeness). Consider the \LP relaxation of the
    hitting set for this set system. Clearly, one can assign weights
    to points (between $0$ and $1$), such that the total weight of the
    points is at most $\kopt$, and for every range in $\RangeSpace$
    the total weight of the points it covers is at least $1$. Dividing
    this fractional solution by $\kopt$, we get a weighted set system,
    where every set has weight at least $\eta = 1/\kopt$, and total
    weight of the points is $1$. That is, we can interpret these
    weights over the points as a measure, where all the sets of
    interests are $\eta$-heavy. A random sample of size
    $O( (d/\eta) \log (1/\eta)) = O\pth{\kopt d \log \kopt}$ of
    $\PntSet$ (according to the weights) is an $\eta$-net with
    constant probability \cite{hw-ensrq-87}, and stabs all the sets of
    $\RangeSpace$, as desired. Should the random sample fail, one can
    sample again till success.
\end{proof}

\subsection{Approximation via a greedy algorithm}
\seclab{greedy:h:set}%

\begin{lemma}
    \lemlab{greedy}%
    Let $\Pin$ and $\Pout$ be sets of points in $\Re^d$ contained in
    the unit ball, with a total of $n$ points.  For parameters
    $\eps, \delta \in (0,1)$, one can compute, in polynomial time, a
    $\ApproxQB{\bigl.(1+\delta)\eps}%
    { O( \eps^{-2} \delta^{-2} \log n )}$-approximation
    to the optimal discrete set
    $\OptAprxZ{\Pin}{\Pout}{\eps }$.
\end{lemma}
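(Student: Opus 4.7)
The plan is to combine the iterative Perceptron algorithm of \lemref{n:n} with a greedy, set-cover-style rule for building $\algset$ one point at a time. Let $\OptSet = \OptAprxZ{\Pin}{\Pout}{\eps}$ and $\kopt = \cardin{\OptSet}$. Applying \lemref{n:n} (with parameter $\delta\eps/2$) to $\OptSet$ with query $\pnt$, and using that $\diamX{\OptSet} \leq 2$, for every $\pnt \in \Pin$ there exists a subset $T_\pnt \subseteq \OptSet$ of size $m = O\pth{1/(\delta\eps)^{2}}$ with $\distSet{\pnt}{\CHX{T_\pnt}} \leq (1+\delta)\eps$. The algorithm does not construct the $T_\pnt$; only their existence is used in the analysis.

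The algorithm maintains a set $\algset \subseteq \Pout$ and, for each $\pnt \in \Pin$, a running Perceptron estimate $\pntA_\pnt \in \CHX{\algset}$ obtained by iterating the update of \lemref{n:n} restricted to $\algset$. Call $\pnt$ \emph{active} if $\distY{\pnt}{\pntA_\pnt} > (1+\delta)\eps$, and set $\hat\vect_\pnt = (\pnt-\pntA_\pnt)/\norm{\pnt-\pntA_\pnt}$. A candidate $q \in \Pout$ is said to \emph{service} an active $\pnt$ if $\hat\vect_\pnt \cdot q \geq \hat\vect_\pnt \cdot \pntA_\pnt + \delta\eps$. Since $\diam \leq 2$, this translates into $\cos \alpha > \delta\eps/2$ in the proof of \lemref{n:n}, so such a $q$ may be plugged in as $\pnt_i$ in the Perceptron update and the same multiplicative decrease of $\ell_i$ holds. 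In each outer round, the greedy rule adds to $\algset$ the $q \in \Pout$ servicing the maximum number of active points, and then applies the corresponding updates.

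The first key step is to show that every round services many active points. For an active $\pnt$, let $\pi_\pnt \in \CHX{\OptSet}$ be a nearest point to $\pnt$ in $\CHX{\OptSet}$, so $\norm{\pnt-\pi_\pnt} \leq \eps$ by assumption. A short triangle-inequality calculation yields $\hat\vect_\pnt \cdot \pi_\pnt \geq \hat\vect_\pnt \cdot \pntA_\pnt + \delta\eps$, and expressing $\pi_\pnt$ as a convex combination of $\OptSet$ produces some $o^*_\pnt \in \OptSet$ with the same bound. Thus every active $\pnt$ is serviced by some point of $\OptSet$, and by pigeonhole over the at most $\kopt$ points of $\OptSet \subseteq \Pout$, some single $o^* \in \Pout$ services at least $A_t/\kopt$ of the $A_t$ currently active points, which lower-bounds what the greedy rule attains.

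Finally, a standard greedy-set-cover potential argument closes the analysis. By \lemref{n:n}, each $\pnt$ converges after at most $m$ successful services, so with $r_\pnt^{(t)} \in [0,m]$ its remaining service budget and $\Phi_t = \sum_\pnt r_\pnt^{(t)}$, one has $\Phi_0 \leq nm$, while each round drops $\Phi$ by at least $A_t/\kopt \geq \Phi_t/(m\kopt)$, so $\Phi_{t+1} \leq \Phi_t \pth{1 - 1/(m\kopt)}$. After $T = O\pth{m\kopt \log(nm)} = O\pth{\kopt (\delta\eps)^{-2} \log n}$ rounds the potential drops below $1$, and every point of $\Pin$ is $(1+\delta)\eps$-close to $\CHX{\algset}$ as required. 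The main obstacle is the second paragraph: one must verify that the weaker \emph{servicing} condition (rather than strict direction-extremality in all of $\Pout$, as in \lemref{n:n}) still drives the same $O(1/(\delta\eps)^2)$ per-point convergence. Once that is in place, the pigeonhole and potential arguments are routine.
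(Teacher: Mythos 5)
Your approach is essentially the same as the paper's: run one Perceptron-style estimate per point of $\Pin$ (driven by the progress lemma, \lemref{n:n}), combine it with a greedy set-cover rule over $\Pout$, and use a pigeonhole argument over $\OptSet \subseteq \Pout$ to guarantee that each greedy round services at least a $1/\kopt$ fraction of the currently active points. The paper phrases the per-round progress condition via the $\eps$-shadow halfspace $\hpZ{\Pin}{\eps}{\vect}$, which for an unhappy point implies your directional condition $\hat\vect_\pnt \cdot q \geq \hat\vect_\pnt \cdot \pntA_\pnt + \delta\eps$, and the pigeonhole (some $o^*\in\OptSet$ services a $1/\kopt$ fraction) is identical in content.

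One small gap worth fixing: your potential argument with $\Phi_0 \leq nm$ and multiplicative decay $\pth{1-1/(m\kopt)}$ yields $T = O\pth{m\kopt\log(nm)} = O\pth{\kopt(\eps\delta)^{-2}\pth{\log n + \log(1/(\eps\delta))}}$; the final equality you assert, $= O(\kopt(\eps\delta)^{-2}\log n)$, requires $1/(\eps\delta)$ polynomial in $n$, which is not assumed. The paper's counting avoids this by a halving argument directly on the number of unhappy points: since $n_j \geq n_i$ for $j\leq i$, any window of $\kappa = \Theta\pth{\kopt/(\eps\delta)^2}$ rounds accrues at least $\kappa n_i/\kopt = \Omega(n_i/(\eps\delta)^2)$ services; but the $n_{i-\kappa}$ points active at the start of the window can absorb only $O(n_{i-\kappa}/(\eps\delta)^2)$ services total, so $n_{i-\kappa}\geq 2n_i$. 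This gives $O(\log n)$ halving phases of length $\kappa$ each, matching the claimed $O\pth{\kopt(\eps\delta)^{-2}\log n}$ exactly; swapping this in for your potential argument closes the gap.
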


\begin{proof}
    The algorithm is greedy -- the basic idea is to restrict the set
    system of \lemref{v:c:argument} to the relevant active
    sets. Formally, let $\PntSetA_0 = \brc{\pnt_0}$, where $\pnt_0$ is
    some arbitrary point of $\Pout$. For $i>0$, in the $i$\th
    iteration, consider the current convex set
    $\CSet_{i-1} = \CHX{ \PntSetA_{i-1}}$.  For a point
    $\query \in \Pin \setminus \CSet_{i-1}$, let
    $\nnY{\query}{C_{i-1}}$ be its nearest point in $C_{i-1}$, and let
    $\vect_i(\query)$ be the direction of the vector
    $\query - \nnY{\pnt}{C_{i-1}}$.  In particular, consider the
    $\eps$-shadow halfspace
    $\hplane^+ = \hpZ{\Pin}{\eps }{\vect_i(\query)}$, see
    \defref{shadow}, which should be hit by the desired hitting
    set\footnote{The hitting set computed by the algorithm is somewhat
       weaker, only hitting all the $(1+\delta)\eps$-shadows.}.
    
    Let $\PntSetC_i \subseteq \Pin$ be the set of points of $\Pin$
    that are \emph{unhappy}; that is, they are in distance
    $\geq (1+\delta)\eps$ from $\CHX{\PntSetA_{i-1}}$.  We restrict
    our attention to the set system of active halfspaces; that is,
    \begin{align*}
      \RangeSpace_i%
      =%
      \pth{\Bigl.\Pout,%
      \Set{%
      \Pout \cap \hpZ{\bigl.\Pin}{\eps
      }{\vect_i(\query)}%
      }{%
      \query \in \PntSetC_i
      }%
      }.%
    \end{align*}
    (As before, if $\Pout\cap \hplane^+ $ is empty, then no
    approximation is possible, and the algorithm is done.)  Now, as in
    the classical algorithm for hitting set (or set cover), pick the
    point $\pnt_i$ in $\Pout$ that hits the largest number of ranges
    in $\RangeSpace_i$, and add it to $\PntSetA_{i-1}$ to form
    $\PntSetA_i$.

    A point $\query \in \PntSetC_i$, is \emph{hit} in the $i$\th
    iteration if
    \begin{math}
        \pnt_{i} \in \hpZ{\bigl.\PntSet}{\eps
           }{\vect_i(\query)}.
    \end{math}
    The argument of \lemref{n:n} (or \remref{preceptron}) implies that
    after a point $\query \in \Pin$ is hit $c/(\eps^2\delta^2)$ times,
    its distance to the convex-hull of the current points is smaller
    than $(1+\delta)\eps$, and it is no longer unhappy, where $c$ is
    some sufficiently large constant.  Indeed, using the notation of
    the proof \lemref{n:n}, if a point $\query \in \PntSetC_i$ is hit
    in the $i$\th iteration by a point $\pnt_i$, and
    $\distSet{\query}{\CHX{\PntSetA_{i-1}}} \leq (1+\delta)\eps$ then
    we are done. Otherwise, let
    $\pntA_{i-1}= \nnY{\query}{\CHX{\PntSetA_{i-1}}}$, and let
    $\pntB_i$ be the projection of $\pnt_i$ to the segment
    $\query \pntA_{i-1}$, see \figref{proof:n:n}. We have that
    \begin{math}
        \distY{\pntB_i}{\pntA_{i-1}}%
        \geq%
        \distY{\query}{\pntA_{i-1}} - \distY{\query}{\pntB_{i}}%
        \geq%
        (1+\delta)\eps - \eps%
        \geq%
        \eps \delta,
    \end{math}
    since $\distY{\query}{\pntB_{i}} \leq \eps$ (as $\pnt_i$ and
    $\pntB_i$ are both in the $\eps$-shadow of $\query$). Now, the
    analysis of \lemref{n:n} applies (with $\eps \delta$ instead of
    $\eps$), implying that after $O(1/(\eps \delta)^2)$ iterations,
    the distance of $\query$ from the current convex-hull would be
    smaller than $(1+\delta)\eps$.

    So, let $n_i$ be the number of unhappy points in the beginning of
    the $i$\th iteration, and observe that at least $n_i/\kopt$ points
    are being hit in the $i$\th iteration. In particular, let
    $\kappa = 2\ceil{c \kopt / (\eps^2\delta^2)}$, and observe that in
    the iterations between $i-\kappa$ and $i$, we have that the number
    of points being hit is at least
    \begin{math}
        \sum_{j=i-\kappa}^i n_j / \kopt \geq 2 n_i c /(\eps^2\delta^2).
    \end{math}
    This implies that $n_{i-\kappa} \geq 2n_i$. Otherwise,
    $n_{i-\kappa} < 2n_i$, implying that in this range of iterations
    $> N = n_{i-\kappa} c /(\eps^2\delta^2)$ hits happened, which is
    impossible, as $n_{i-\kappa}$ points can be hit at most $N$ times
    before they are all happy.

    As such, after $\kappa$ iterations of the greedy algorithm, the
    number of unhappy points drops by a factor of two, and we conclude
    that after $O( \kopt \, (\eps\delta)^{-2} \log n)$ total
    iterations, the algorithm is done.
\end{proof}


\section{Approximating the convex hull in high %
   dimensions}

Here we provide an efficient bi-criteria approximation algorithm for
\probref{in_equals_out}. That is, the algorithm computes a subset
$\PntSetA \subseteq \CHX{\PntSet}$, such that 
\begin{inparaenum}[(i)]
    \item
    $\distH{\CHX{\PntSetA}}{\CHX{\PntSet}} \leq O\pth{\eps^{1/3}}
    \diamX{\PntSet}$, and
    \item
    $\cardin{\PntSetA} \leq O\pth{\koptY{\PntSet}{\eps}/\eps^{2/3}}$.
\end{inparaenum}
Significantly, the computed set $\PntSetA$ is actually a subset of $\PntSet$, 
implying that the algorithm simultaneously solves both the continuous and discrete
variants of the problem.

To simplify the presentation, in the remainder of this section we
assume $\diam = \diamX{\PntSet} = O(1)$, and hence drop most
appearances of $\diam$.

\subsection{The algorithm}

Let $\reach = 8\eps^{1/3}$. The algorithm is greedy, similar in spirit
to the Gonzalez algorithm for $k$-center clustering \cite{g-cmmid-85}
and subspace approximation algorithms \cite[Lemma 5.2]{hv-hdsfl-04}.
The algorithm starts with an arbitrary point $\pntA_0 \in \PntSet$.
For $i > 0$, in the $i$\th iteration, the algorithm computes the point
$\pntA_i$ in $\PntSet$ which is furthest away from
\begin{math}
    \CHX{\PntSetA_{i-1}},
\end{math}
where
\begin{math}
    \PntSetA_{i-1} = \brc{\pntA_0, \dots, \pntA_{i-1}}.
\end{math}
For now assume these distance queries are done exactly -- later on we
describe how to use approximate queries (i.e., \lemref{n:n}).  Let
\begin{math}
    r_i = \distSet{\pntA_i}{\CHX{\PntSetA_{i-1}}}.
\end{math}
The algorithm stops as soon as $r_i\leq \reach $, and outputs
$\PntSetA_{i-1}$.

\begin{observation}
    In the above algorithm, for all $i>0$, the point $\pntA_i$ is a
    vertex of $\CHX{\PntSet}$ (so long as exact distance queries are
    used).  In particular, if the output has to be a subset of the
    convex hull vertices, one can choose $\pntA_0$ to be the extreme
    vertex in any direction.
\end{observation}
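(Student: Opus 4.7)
The plan is to reduce the claim to the two convexity facts already proved in \lemref{useful}. Fix an iteration $i > 0$ and set $C = \CHX{\PntSetA_{i-1}}$; this is a convex subset of $\Re^d$. First, I would invoke part (i) of \lemref{useful} to conclude that $f(\pnt) = \distSet{\pnt}{C}$ is a convex function on $\Re^d$, and hence convex when restricted to the bounded convex domain $\CHX{\PntSet}$.

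Next, I would apply part (ii) of \lemref{useful}: a convex function on a bounded convex domain attains its maximum at an extreme point of the domain. The extreme points of $\CHX{\PntSet}$ are vertices of $\CHX{\PntSet}$, and every such vertex lies in $\PntSet$. Since $\PntSet \subseteq \CHX{\PntSet}$, it follows that the maximum of $f$ over $\PntSet$ equals the maximum over $\CHX{\PntSet}$, and is attained at some vertex of $\CHX{\PntSet}$. Consequently the algorithm's chosen $\pntA_i$, which maximizes $f$ over $\PntSet$ under exact distance queries, may be taken to be a vertex of $\CHX{\PntSet}$.

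For the additional assertion about $\pntA_0$, I would simply note that the extreme point of $\PntSet$ in any fixed direction $\vect$ is by definition a vertex of $\CHX{\PntSet}$, and is therefore a valid choice for an arbitrary starting point. The only real subtlety is tie-breaking: if $f$ admits several maximizers in $\PntSet$, the argument above guarantees that at least one of them is a vertex of $\CHX{\PntSet}$, so one simply selects such a maximizer. This explains the qualifier about exact distance queries in the statement — with only approximate queries, a near-maximizer could in principle fail to be a vertex, so the cleanest guarantee requires exact maximization.
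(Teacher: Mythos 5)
The paper states this observation without giving a proof, so there is no paper argument to compare against; your proposal supplies the natural one. The reduction to \lemref{useful} parts (i) and (ii) is exactly right: $f(\pnt) = \distSet{\pnt}{\CHX{\PntSetA_{i-1}}}$ is convex by (i), and by (ii) (whose proof in fact shows the maximum is attained at an extremal point, not merely a boundary point) the maximum over $\CHX{\PntSet}$ is attained at a vertex, which for a finite set $\PntSet$ is necessarily a point of $\PntSet$. You also correctly flag the one genuine subtlety, namely that the observation as literally worded presupposes a tie-breaking rule: a non-vertex point of $\PntSet$ can in principle tie the maximum (distance to a convex set is convex but not strictly convex along rays), and the convexity argument only guarantees that \emph{some} maximizer is a vertex. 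Your reading --- that $\pntA_i$ "may be taken to be" a vertex and that exact queries are what make this selection possible --- is the correct interpretation of the observation's intent, and the final remark about choosing $\pntA_0$ as the extreme point in an arbitrary direction is likewise correct. No gaps.
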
%

\subsection{Analysis}

By the termination condition of the algorithm, when the algorithm
stops every point in $\PntSet$ is in distance at most
$\reach = 8\eps^{1/3}$ away from $\CHX{\PntSetA_{i-1}}$, as desired.
As for the number of rounds until termination, we argue that in each
round there exists some point $\opnt \in \OptSet$ which is far from
$\CHX{\PntSetA_{i-1}}$ (as specified in \clmref{optBounded}) and such
that
\begin{math}
    \distSet{\opnt}{\PntSetA_{i}}%
    \leq%
    (1-\Omega(\eps^{2/3})) \distSet{\opnt}{\PntSetA_{i-1}}.
\end{math}

\InNotJVer{%
   \parpic[r]{%
      \begin{minipage}{0.45\linewidth} \hfill%
          \includegraphics{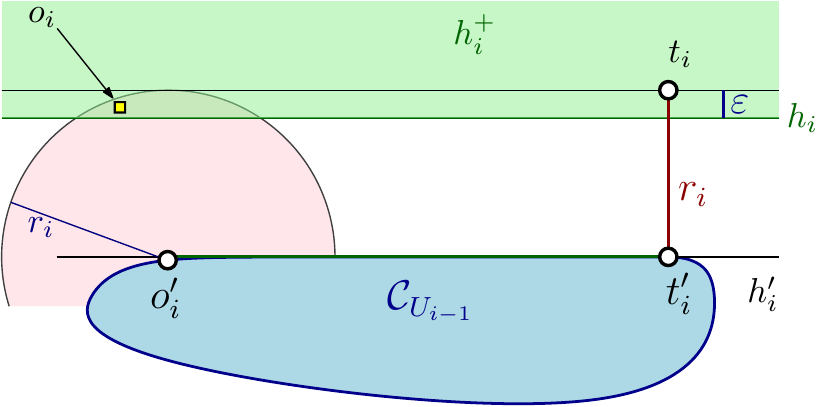}%
          \captionof{figure}{}%
          \figlab{d:2}%
      \end{minipage}%
   }%
}%
\InJVer{%
   \begin{figure}[h]
       \includegraphics{figs/distances_2}%
       \captionof{figure}{}%
       \figlab{d:2}%
   \end{figure}%
}

So consider some round $i$, the current set $\PntSetA_{i-1}$, and the
point $\pntA_i \in \PntSet$ furthest away from $\CHX{\PntSetA_{i-1}}$.
Let $\pntA_i'$ be the closest point to $\pntA_i$ in
$\CHX{\PntSetA_{i-1}}$, and let $r_i=\distY{\pntA_i}{\pntA_i'}$.  Let
$\hplane_i$ be the hyperplane orthogonal to the segment
$\pntA_i \pntA_i'$ and lying $\eps$ distance below $\pntA_i$ in the
direction of $\pntA_i'$. Let $\hplane_i^+$ denote the closed halfspace
having $\hplane_i$ as its boundary, and that contains $\pntA_i$, see
\figref{d:2}.  If no points of $\OptSet$ are in $\hplane_i^+$ then
$\distSet{\bigl.\pntA_i}{\CHX{\OptSet}} > \eps $, which is
impossible. Therefore, there must be a point
\begin{math}
    \opnt_i \in \OptSet \cap \hplane_i^+.
\end{math}
Let $\opnt_i'$ be the closest point to $\opnt_i$ in
$\CHX{\PntSetA_{i-1}}$.

\begin{claim}%
    \clmlab{optBounded}%
    $r_i-\eps \leq \distY{\opnt_i}{\opnt_i'} \leq r_i$.
\end{claim}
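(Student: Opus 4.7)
The plan is to handle the two inequalities separately, and to rely throughout on the supporting hyperplane at $\pntA_i'$ orthogonal to $\pntA_i - \pntA_i'$.

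\emph{The upper bound.} The key point is that $\opnt_i \in \OptSet \subseteq \CHX{\PntSet}$, while $\pntA_i$ was defined as the point of $\PntSet$ maximizing the distance to $\CHX{\PntSetA_{i-1}}$. I would invoke \lemref{useful}(i),(ii): the function $f(\pnt) = \distSet{\pnt}{\CHX{\PntSetA_{i-1}}}$ is convex, and on the bounded convex set $\CHX{\PntSet}$ it attains its maximum at an extremal point, hence at a point of $\PntSet$. Thus $f(\opnt_i) \leq \max_{\pnt \in \PntSet} f(\pnt) = r_i$, which is exactly $\distY{\opnt_i}{\opnt_i'} \leq r_i$.

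\emph{The lower bound.} Let $\hplane'$ be the hyperplane through $\pntA_i'$ orthogonal to the segment $\pntA_i \pntA_i'$. Since $\pntA_i'$ is the nearest point to $\pntA_i$ in the convex set $\CHX{\PntSetA_{i-1}}$, standard projection on a convex set gives that $\hplane'$ is a supporting hyperplane for $\CHX{\PntSetA_{i-1}}$, with $\CHX{\PntSetA_{i-1}}$ lying entirely on the side of $\hplane'$ opposite to $\pntA_i$. The hyperplane $\hplane_i$ is parallel to $\hplane'$ and lies at distance $r_i - \eps$ from $\hplane'$ on the $\pntA_i$ side. Since $\opnt_i \in \hplane_i^+$, its signed distance past $\hplane'$ in the direction of $\pntA_i - \pntA_i'$ is at least $r_i - \eps$, so $\opnt_i$ lies at Euclidean distance at least $r_i - \eps$ from $\hplane'$, and in particular at distance at least $r_i - \eps$ from $\CHX{\PntSetA_{i-1}}$, which yields $\distY{\opnt_i}{\opnt_i'} \geq r_i - \eps$.

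\emph{Main obstacle.} Neither direction is deep, but the one place to be careful is the upper bound: it is tempting to argue it by saying that $\opnt_i$ lies ``below'' $\pntA_i$ along the direction $\pntA_i - \pntA_i'$, which only controls one coordinate and is not enough. The right argument is the global one above, using that the distance-to-a-convex-set function is convex on all of $\CHX{\PntSet}$ and therefore maximized on $\PntSet$; this is precisely the content of \lemref{useful}(i),(ii), so it reduces to a clean citation once framed this way.
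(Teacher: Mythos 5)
Your proof is correct and follows essentially the same route as the paper's: the lower bound compares the two parallel hyperplanes through $\pntA_i'$ and through $\pntA_i - \eps$, using that $\CHX{\PntSetA_{i-1}}$ lies on the far side of the supporting hyperplane at $\pntA_i'$ while $\opnt_i$ lies in $\hplane_i^+$; and the upper bound uses that the distance-to-$\CHX{\PntSetA_{i-1}}$ function is convex and hence maximized over $\CHX{\PntSet}$ at a point of $\PntSet$. The only cosmetic difference is that you invoke \lemref{useful}(i),(ii) directly, whereas the paper cites \lemref{useful}(iii), whose proof encapsulates exactly that convexity-plus-extremal-point argument.
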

\begin{proof}
    Let $\hplane_i'$ be the translation of $\hplane_i$ so it passes
    through $\pntA_i'$, see \figref{d:2}.  We have that
    \begin{math}
        r_i-\eps%
        =%
        \distSet{\hplane_i'}{\hplane_i}%
        \leq%
        \distY{\opnt_i}{\opnt_i'},
    \end{math}
    as $\opnt_i$ lies in $\hplane_i^+$ (i.e., above $\hplane_i$) and
    all of $\CHX{\PntSetA_{i-1}}$ lies below $\hplane_i'$.

    For the second part, for any $\pnt \in \Re^d$, let
    $f_{i-1}(\pnt)$ be the distance of $\pnt$ from
    $\CHX{\PntSetA_{i-1}}$.  By \lemref{useful} (iii), and since
    $\opnt_i \in \OptSet \subseteq \CHX{\PntSet}$, it follows that
    \begin{math}
        \distY{\opnt_i}{\opnt_i'}%
        \leq%
        \max_{\pnt \in \CHX{\PntSet}} f_{i-1}\pth{ \pnt}%
        =%
        \distY{\pntA_i}{\pntA_{i}'}%
        =%
        r_i.
    \end{math}
\end{proof}%

\begin{lemma}
    \lemlab{shrink}%
    If $r_i \geq 8 \eps^{1/3}$ then
    \begin{math}
        \distSet{\bigl. \opnt_i}{\CHX{\PntSetA_i}}%
        \leq%
        \pth{1- \eps^{2/3}}
        \distSet{\bigl.\opnt_i}{\CHX{\PntSetA_{i-1}}}.
    \end{math}
\end{lemma}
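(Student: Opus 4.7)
The plan is to exhibit a point of $\CHX{\PntSetA_i}$ close to $\opnt_i$, specifically on the segment $[\opnt_i', \pntA_i] \subseteq \CHX{\PntSetA_i}$. Set coordinates placing $\pntA_i'$ at the origin and $\pntA_i = r_i e_1$, so that the supporting hyperplane $\hplane_i'$ of \figref{d:2} becomes $\{x_1 = 0\}$ and $\CHX{\PntSetA_{i-1}} \subseteq \{x_1 \leq 0\}$. Write $\opnt_i = (h, \vec{y})$ and $\opnt_i' = (-h', \vec{z})$ with $h' \geq 0$. From $\opnt_i \in \hplane_i^+$ one gets $h \geq r_i - \eps$; from \clmref{optBounded} one gets $d \leq r_i$; and combined with the identity $d^2 = (h+h')^2 + \norm{\vec{y}-\vec{z}}^2$ these force $h \in [r_i - \eps, r_i]$, $h' \in [0, \eps]$, and $\norm{\vec{y} - \vec{z}}^2 \leq 2 r_i \eps$.

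The central ingredient is the first-order optimality condition at $\opnt_i' = \nnY{\opnt_i}{\CHX{\PntSetA_{i-1}}}$. Applied to the competitor $\pntA_i' \in \CHX{\PntSetA_{i-1}}$, it reads $(\pntA_i' - \opnt_i') \cdot (\opnt_i - \opnt_i') \leq 0$, which in coordinates simplifies to $\vec{z} \cdot \vec{y} \geq \norm{\vec{z}}^2 + h'(h+h')$. Setting $N = (\pntA_i - \opnt_i') \cdot (\opnt_i - \opnt_i')$ and $M = \norm{\pntA_i - \opnt_i'}^2$, this optimality inequality forces $N \leq r_i(h+h') \leq r_i^2 \leq (r_i+h')^2 \leq M$, so $\lambda = N/M \leq 1$; the lower bound $\lambda \geq 0$ will follow from the $N$ estimate below. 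With $\lambda \in [0,1]$, the foot of the perpendicular from $\opnt_i$ to the line through $\opnt_i'$ and $\pntA_i$ sits on the segment itself, yielding $\distSet{\opnt_i}{\CHX{\PntSetA_i}}^2 \leq d^2 - N^2/M$.

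The remaining step is a lower bound $N^2/M \geq (2\eps^{2/3} - \eps^{4/3})\, d^2$; since $(1 - \eps^{2/3})^2 = 1 - 2\eps^{2/3} + \eps^{4/3}$, this gives the claimed inequality after taking square roots. Expanding $N$ and applying Cauchy-Schwarz to the cross term $\vec{z} \cdot (\vec{y} - \vec{z})$ with the perpendicular bound from the first paragraph, together with $\norm{\vec{z}} \leq \diamX{\PntSet} = O(1)$, yields $N \geq r_i(r_i - \eps) - O(\sqrt{r_i \eps})$ while $M = O(1)$. The main obstacle is the careful numerical verification that the threshold $r_i \geq 8\eps^{1/3}$ is exactly what is needed for $N^2/M$ to exceed $(2\eps^{2/3} - \eps^{4/3})\, d^2 \leq 2\eps^{2/3} r_i^2$: the exponent $1/3$ is the balance point where the $\sqrt{r_i \eps}$ Cauchy-Schwarz error absorbed into $N$ just matches the $\eps^{2/3}$ multiplicative shrinkage we are trying to achieve.
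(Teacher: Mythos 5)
Your approach is correct and takes a genuinely different route from the paper's proof, while ultimately projecting onto the same segment $[\opnt_i', \pntA_i] \subseteq \CHX{\PntSetA_i}$. The paper reduces to a two-dimensional configuration, then \emph{asserts without proof} that the worst case has $\angle \pntA_i \pntA_i' \opnt_i'$ a right angle and $\distY{\pntA_i'}{\opnt_i'}=1$, then bounds the projection distance via trigonometry ($\sin\alpha \leq 1 - r^2/4$) together with a geometric ``hippodrome'' sweep argument and a sanity check that the perpendicular foot lies interior to the segment. You instead place $\pntA_i'$ at the origin, exploit the first-order optimality condition $(\pntA_i'-\opnt_i')\cdot(\opnt_i-\opnt_i') \leq 0$ at the nearest-neighbor $\opnt_i'$, and derive all the needed constraints on $h$, $h'$, $\norm{\vec y-\vec z}$, $N$, and $M$ algebraically, obtaining $\distSet{\opnt_i}{\CHX{\PntSetA_i}}^2 \leq d^2 - N^2/M$ with $N/M \in [0,1]$ guaranteed by the same optimality inequality that was used to bound $N$ from above. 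This replaces the paper's unjustified worst-case reduction with a rigorous variational argument, and it handles the ``foot interior to segment'' issue (the paper's sanity condition) automatically via $\lambda = N/M \leq 1$. The trade-off is that what the paper handles in closed trigonometric form, you leave as a final numerical verification that $N \geq r_i(r_i-\eps) - O(\sqrt{r_i\eps})$ and $M = O(1)$ force $N^2/M \geq 2\eps^{2/3} r_i^2$ when $r_i \geq 8\eps^{1/3}$; this does check out (with $\diam \leq 2$ one gets $N \gtrsim 0.86\,r_i^2$ and $M \lesssim 8$, so $N^2/M \gtrsim 0.09\, r_i^4 \geq 2\eps^{2/3} r_i^2$ once $r_i^2 \geq 22\eps^{2/3}$), but you should write those constants out rather than deferring to ``careful numerical verification.''
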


\begin{proof}
    In the following, all entities are defined in the context of the
    $i$\th iteration, and we omit the subscript $i$ denoting this to
    simplify the exposition.  Assume, for the time being, that the
    angle $\angle \pntA \pntA' \opnt'$ is a right angle and the
    segment $\pntA' \opnt'$ has length $\ell = 1$, see
    \figref{second:stage}. This is the worst case configuration in
    terms of the new convex-hull $\CHX{\PntSetA_i}$ getting closer to
    $\opnt$, as can be easily seen.

    \InNotJVer{%
    \parpic[r]{%
       \begin{minipage}{8.5cm}%
           \centerline{\includegraphics[page=2]{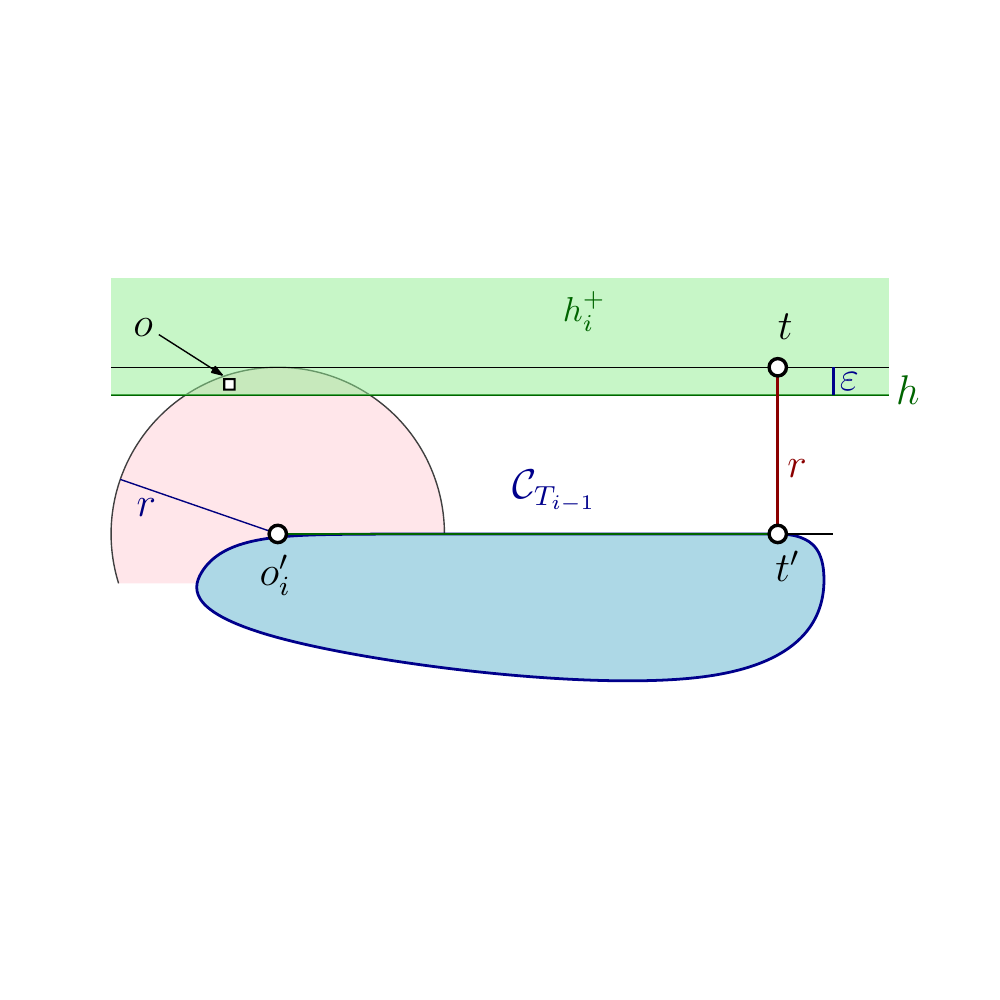}}%
           \captionof{figure}{}%
           \figlab{second:stage}%
       \end{minipage}
    }%
    }
    \InJVer{%
       \begin{figure}
           \centerline{\includegraphics[page=2]{figs/distances_2_ext}}%
           \captionof{figure}{}%
           \figlab{second:stage}%
       \end{figure}
    }

    Let $\pntC$ be the intersection of $\hplane$ with the ray
    emanating from $\opnt'$ in the direction $\pntA -\pntA'$. Let
    $\pntC'$ be the closest point to $\pntC$ on $\opnt' \pntA$, let
    $\idist = \distY{\pntC}{\pntC'}$, and let $\rho$ be the radius of
    the ball formed by $\ballY{\opnt'}{r} \cap \hplane$. See
    \figref{second:stage}.
    
    Rather than bounding the distance of $\opnt$ to $\CHX{\PntSetA_i}$
    directly, instead we use bounds on $\rho$ and $\idist$.  Observe
    that
    \begin{math}
        \opnt%
        \in%
        \hplane^+ \cap \ballY{\opnt'}{r}%
        \subseteq%
        \ballY{\pntC}{\rho}, 
    \end{math}
    and as such, $\distY{\opnt}{\pntC} \leq \rho$.  Now, we have
    \begin{math}
        \rho%
        =%
        \sqrt{r^2 - \distY{\pntC}{\opnt'}^2}%
        = %
        \sqrt{r^2 - (r - \eps)^2}%
        =%
        \sqrt{2 r \eps - \eps^2}%
        \leq%
        \sqrt{2 r \eps}.
    \end{math}

    Let $\alpha = \angle \pntC \opnt'\pntA$ and
    $\beta = \pi / 2 - \alpha = \angle \pntA \opnt' \pntA'$, and
    observe that
    $\sin \alpha = \cos \beta = \ell /\sqrt{\ell^2 + r^2}$, where
    $\ell = \distY{\opnt'}{\pntA'}=1$.  Now, we have
    \begin{align}
      \frac{\idist}{r - \eps}%
      &=%
        \sin \alpha%
        =%
        \frac{{\ell}} { \sqrt{\ell^2 +r^2}}%
        =%
        \frac{1}{\sqrt{1 +r^2}}%
        \leq%
        {\sqrt{1 - \frac{r^2}{2}}}%
        \leq%
        1 -\frac{r^2}{4}%
        \eqlab{gogi}
    \end{align}
    since $\ell =1$ and $r \leq 1$.
    
    \begin{figure*}[h]
        \centerline{\includegraphics[page=1]{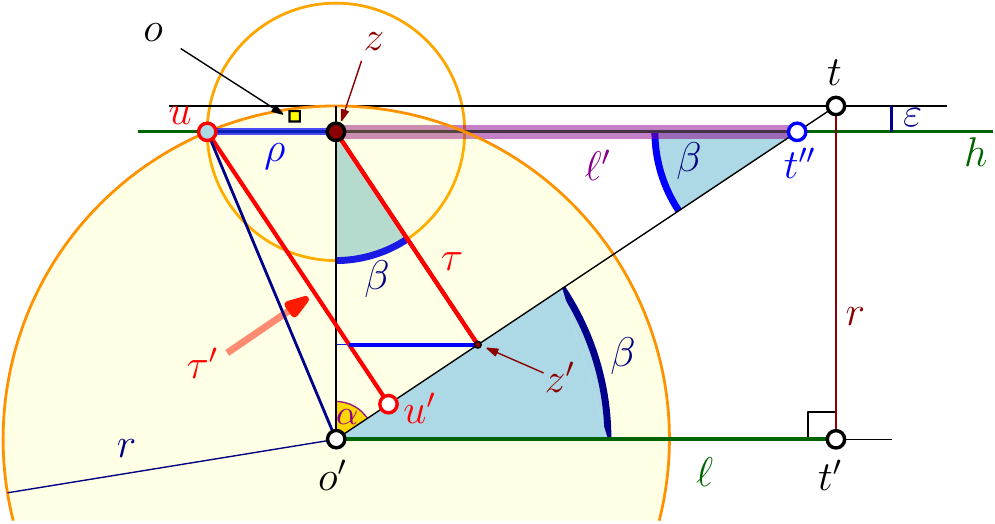}}%
        \caption{Note, that $\opnt$ is not necessarily in the two
           dimensional plane depicted by the figure. All other points
           are in this plane.}%
        \figlab{sanity}%
    \end{figure*}%
    
    \paragraph*{Sanity condition:} %
    Consider the line which is the intersection of the hyperplane
    $\hplane$ and the two dimensional plane spanned by $\pntA, \pntA'$
    and $\opnt'$ (this line is denoted by $\hplane$ in the
    figures). Let $\pntD$ be the point in distance $\rho$ on this line
    from $\pntC$, on the side further away from $\pntA$. Let $\pntA''$
    be the intersection of $\hplane$ with $\pntA \opnt'$.  Next, let
    $\pntD'$ be the nearest point to $\pntD$ on the segment
    $\pntA \opnt'$, see \figref{sanity}. 
    
    We want to argue the distance between $\opnt$ and
    $\CHX{\PntSetA_i}$, can be bounded in terms of the distance
    between $\pntD$ and $\pntD'$, however to do so we need to
    guarantee that $\pntD'$ is in the interior of this segment
    $\pntA \opnt'$.  Setting $\ell' = \distY{\pntC}{\pntA''}$, this
    happens if
    \begin{align*}
      \distY{\pntD'}{ \pntA''} < \distY{\pntA''}{ \opnt'}
      &\iff%
        \distY{\pntD'}{ \pntA''}%
        =%
        \pth{\rho + \ell' }\cos \beta%
        =%
        \pth{\rho + \ell' }%
        \frac{\ell'}{ \distY{\pntA''}{ \opnt'}}%
        <%
        \distY{\pntA''}{ \opnt'}%
      \\
      &\iff%
        \pth{\rho + \ell' }%
        {\ell'} <%
        \distY{\pntA''}{ \opnt'}^2%
        =%
        \pth{\ell'}^2 + \pth{r-\eps}^2.
    \end{align*}
    Thus, we have to prove that
    \begin{math}
        \rho \ell' <%
        \pth{r-\eps}^2.
    \end{math}
    As $\ell' < \ell =1$, we have that this is implied if
    $\rho \leq \sqrt{2 r \eps} < \pth{r-\eps}^2$, and this inequality
    holds if $r \geq 8 \eps^{1/3}$.

    \InNotJVer{%
    \parpic[r]{%
       \includegraphics[page=1]{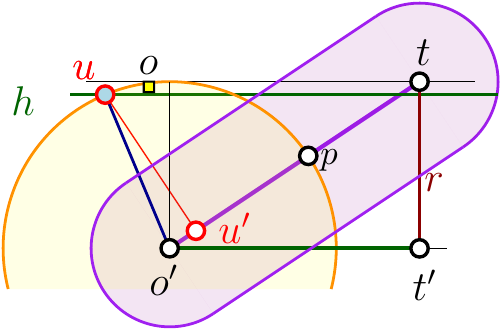}%
    }%
    }
    \InJVer{%
       \begin{figure}[h]
           \includegraphics[page=1]{figs/distances_4}%
       \end{figure}
    }
    
    \medskip%
    \noindent\textbf{Back to the proof:}
    We next bound the distance of $\opnt$ from $\CHX{\PntSetA_i}$.
    Observe that by rotating $\opnt$ around the line $\opnt' \pntA$ we
    can assume that $\opnt$ lies on the plane spanned by
    $\pntA, \pntA', \opnt'$ and its distance to the segment
    $\pntA \opnt'$ has not changed. Now, the set of points in distance
    $r'$ from the segment $\opnt' \pntA$ is a hippodrome, and this
    hippodrome covers a connected portion of $\ballY{\opnt'}{r}$. For
    $r' = \distY{\pntD}{\pntD'}$, by the above sanity condition, this
    hippodrome covers all the points of $\ballY{\opnt'}{r}$ that are
    above $\hplane$.  This implies that $\opnt$ maximizes its distance
    to $\CHX{\PntSetA_i}$ if $\opnt = \pntD$.

    So, let $\idistA = \distY{\pntD}{\pntD'}$.  By the above sanity
    condition the segment $\pntA \opnt'$ and $\pntD \pntD'$ meet at a
    right angle, and hence by similar triangles (see \figref{sanity}),
    we have
    \begin{align*}
      \idistA%
      &=%
        \frac{\ell' + \rho}{\ell'} \idist%
        =%
        \idist + \rho \frac{ \idist}{\ell'}%
        =%
        \idist + \rho \sin \beta%
        =%
        \idist + \rho \frac{r} { \sqrt{\ell^2 +r^2}}%
        =%
        \idist + \rho \frac{r} {\sqrt{1 +r^2}}%
        \leq%
        \idist + \rho r.
    \end{align*}
    This implies, by \Eqref{gogi}, that
    \begin{align*}
        \frac{\distSet{\opnt}{\CHX{\PntSetA_i}}}%
        {\distSet{\opnt}{\CHX{\PntSetA_{i-1}}}}%
        &\leq%
        \frac{\distY{\pntD}{\pntD'}}{\distY{\pntC}{\opnt'}}%
        =%
        \frac{\idistA}{r - \eps}%
        \leq%
        \frac{\idist}{r - \eps} + \frac{\rho r}{r - \eps}%
        \leq%
        \frac{\idist}{r - \eps} + 2\rho%
        \leq %
        1 - {\frac{ r^2}{4}} + 2 \sqrt{r \eps}%
        \leq%
        1 - \eps^{2/3},
    \end{align*}
    if $r \geq 8 \eps^{1/3}$.
\end{proof}

\begin{lemma}
    \lemlab{main}%
    Let $\PntSet$ be a set of $n$ points in $\Re^d$ with diameter
    $\diam = \diamX{\PntSet}$, and let $\eps > 0$ be a parameter, then
    one can compute a set $\PntSetA \subseteq \PntSet$, such
    that
    \begin{compactenum}[(i)]
        \item
        $\distH{\bigl.\CHX{\PntSetA}}{\CHX{\PntSet}} \leq
        \pth{\reachVal + \eps} \diam$,
        \item $m = \cardin{\PntSetA} \leq O\pth{\kopt/\eps^{2/3}}$,
        where $\kopt = \koptY{\PntSet}{\eps}$, and
        \item the running time is $O(nm^2d/\eps^{2})$.
    \end{compactenum}
\end{lemma}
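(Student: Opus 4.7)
\textbf{Proof plan for \lemref{main}.}
My plan is to verify the three conclusions in turn, bootstrapping directly from the greedy algorithm described in the section together with the shrinkage guarantee of \lemref{shrink} and the approximate nearest-point procedure of \lemref{n:n}.

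For the Hausdorff bound in (i), I would first invoke \lemref{useful}(iii): since $\PntSetA \subseteq \CHX{\PntSet}$, the Hausdorff distance equals the one-sided quantity $\dOneSideH{\PntSet}{\CHX{\PntSetA}}$. The exact-distance version of the algorithm terminates precisely when every point of $\PntSet$ lies within $\reach = 8\eps^{1/3}$ of $\CHX{\PntSetA}$. Since we only have access to the $\eps$-approximate distance queries of \lemref{n:n}, I would run the algorithm with $(\eps/2)\diam$ additive slack per query; the worst case adds one $\eps\,\diam$ term to the Hausdorff estimate, giving $\pth{\reachVal + \eps}\diam$ as required.

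For the size bound in (ii), the strategy is a charging argument based on \lemref{shrink}. While $r_i \geq 8\eps^{1/3}$, \clmref{optBounded} produces a witness $\opnt_i \in \OptSet$ with $\distSet{\opnt_i}{\CHX{\PntSetA_{i-1}}} \geq r_i - \eps \geq 7\eps^{1/3}$, and \lemref{shrink} shrinks this particular witness's distance by a factor $(1-\eps^{2/3})$. Charging the iteration to $\opnt_i$, a single point $\opnt \in \OptSet$ can be charged at most $t$ times, where $t$ is the largest integer satisfying $\diamX{\PntSet}\,(1-\eps^{2/3})^{t-1} \geq 7\eps^{1/3}$; taking logs this gives $t = O(\eps^{-2/3}\log(1/\eps))$, and since shrinkages of different points never undo each other (distances to an expanding convex hull are monotone), summing over the $\kopt$ elements of $\OptSet$ yields $m = O(\kopt/\eps^{2/3})$, after absorbing the logarithmic factor by slightly increasing the constant in $\reach$ (one chooses the threshold so that the additional shrinkage per witnessing is an absolute fraction, not an $\eps^{2/3}$ fraction, for the last $O(\log(1/\eps))$ rounds of the geometric progression).

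For the running time in (iii), note that each iteration of the greedy loop must, for every $\obs \in \PntSet$, approximate $\distSet{\obs}{\CHX{\PntSetA_{i-1}}}$. By \lemref{n:n}, with $\cardin{\PntSetA_{i-1}} \le m$, this per-point query runs in $O(md/\eps^2)$ time, so one iteration costs $O(nmd/\eps^2)$. Since there are $m$ iterations, the total is $O(nm^2d/\eps^2)$. The main subtlety I anticipate is precisely the interaction of \lemref{n:n}'s approximate distances with the shrinkage lemma: one must verify that choosing the point $\pntA_i$ that is approximately farthest (rather than exactly farthest) still produces a direction $\vect_i$ whose $\eps$-shadow contains some $\opnt_i \in \OptSet$, so that \clmref{optBounded} and \lemref{shrink} still apply; this is handled by running the approximate query to precision $\eps/c$ for a suitable constant $c$, which only changes the constants in $\reach$.
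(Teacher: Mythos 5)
Your proof takes essentially the same route as the paper: charge each iteration via \clmref{optBounded} to a witness in $\OptSet$, apply the geometric shrinkage of \lemref{shrink}, and replace exact farthest-point queries with the $O(md/\eps^2)$-time approximate queries of \lemref{n:n}. Two small comments on your sketches of the finer points.

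First, your explanation for absorbing the $\log(1/\eps)$ factor is a bit backwards. You write that the shrinkage becomes ``an absolute fraction, not an $\eps^{2/3}$ fraction, for the \emph{last} $O(\log(1/\eps))$ rounds.'' But the shrinkage ratio in \lemref{shrink} is governed by $1 - r^2/4 + 2\sqrt{r\eps}$, which is a genuine constant drop only when $r = \Omega(1)$ --- i.e.\ in the \emph{early} rounds, when the witness is still far from the current hull. The last rounds, where the witness's distance is $\Theta(\eps^{1/3})$, are exactly the ones where you only get the $1 - \eps^{2/3}$ rate. The correct way to kill the logarithm is the halving/telescoping argument the paper alludes to (and carries out explicitly in \lemref{n:n}): partition the trajectory of a fixed witness into phases where its distance lies in $[\diam 2^{-(j+1)}, \diam 2^{-j}]$; phase $j$ costs $O(2^{2j})$ charges since $r = \Theta(\diam 2^{-j})$ there, and summing the geometric series gives $O\bigl(1/\eps^{2/3}\bigr)$, dominated by the final phase. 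Also, ``slightly increasing the constant in $\reach$'' does not help here: enlarging $\reach$ does not change the $1-\eps^{2/3}$ rate that \lemref{shrink} provides near the stopping threshold.

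Second, the subtlety you flag about approximate queries --- whether the $\eps$-shadow of the chosen direction still captures a witness --- is not actually where the slack appears. The $\eps$-shadow halfspace is defined relative to $\pntA_i \in \PntSet$ and its projection, and it always contains a point of $\OptSet$ because $\OptSet$ $\eps$-approximates $\PntSet$; this is independent of whether $\pntA_i$ is the true or only the approximate farthest point. What actually degrades is the \emph{upper} bound $\distY{\opnt_i}{\opnt_i'} \leq r_i$ in \clmref{optBounded} (it becomes $r_i + \eps\diam$), which propagates into the $\rho$ bound in \lemref{shrink} (giving $2\sqrt{r\eps}$ in place of $\sqrt{2r\eps}$). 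This is a constant-factor effect, exactly as the paper notes, so your overall conclusion --- that approximate queries only perturb the constants --- is correct even though the mechanism you identified is not the operative one.

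Otherwise the structure and conclusions match the paper's proof.
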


\begin{proof}
    Recall that in any round before the algorithm terminates
    $r_i > \reach\diam = \reachVal \diam$.  Let
    $\OptSet = \OptAprxY{\PntSet}{\eps}$ be any optimal approximating
    set of size $\kopt$. In the $i$\th iteration of the algorithm, for
    some point $\opnt_i\in \OptSet$, its distance to the convex hull
    of $\CHX{\PntSetA_i}$ shrinks by a factor of $1-\eps^{2/3}$, by
    \lemref{shrink}. Conceptually, we charge round $i$ to $\opnt_i$.
    Now, note that by \clmref{optBounded},
    \begin{math}
        \distSet{\opnt_i}{\CHX{\PntSetA_{i-1}}}\geq r_i-\eps\diam >
        (\reach -\eps)\diam \geq \diam\reach/2.
    \end{math}
    Therefore, once the distance of an optimal point $\opnt$ to
    $\CHX{\PntSetA_i}$ falls below
    $\diam \reach /2 = \reachVal \diam/2$, it cannot be charged again
    in any future iteration.  The initial distance of $\opnt$ to
    $\CHX{\PntSetA_0}$ is at most $\diam$. As such, by
    \lemref{shrink}, an optimal point $\opnt$ can get charged at most
    $k$ times, where $k$ is the smallest positive integer such that
    $\pth{1-\eps^{2/3}}^k \diam \leq \reachValHalf \diam$, which holds
    if $\exp\pth{-k\eps^{2/3}} \leq \reachValHalf $. Namely,
    \begin{math}
        k= O\pth{\eps^{-2/3} \log 1/\eps}.
    \end{math}

    Using the same idea of decreasing values of $\eps$, as done in
    \lemref{n:n}, one can improve this bound to
    $O\pth{1/\eps^{2/3}}$. We omit the easy but tedious details.  We
    conclude that the number of iterations performed by the algorithm
    is at most $m = O\pth{ \kopt/\eps^{2/3} }$.

    So the distance of all the points of $\OptSet$ from
    $\CHX{\PntSetA_{m}}$ is at most $\reach \diam$. Now, consider any
    point
    \begin{math}
        \pnt \in \CHX{\PntSet}.
    \end{math}
    Let $\pntA = \nnY{\pnt}{\CHX{\OptSet}}$, and observe that
    $\distY{\pnt}{\pntA} \leq \eps \diam$.  Since
    $\pntA \in \CHX{\OptSet}$, we have that $\pntA$ can be written as
    a convex combination $\pntA= \sum_{i=1}^\nu \alpha_i \opnt_i$,
    where $\alpha_1, \ldots, \alpha_\nu \geq 0$, $\sum_i \alpha_i =1$,
    and $\opnt_1,\ldots, \opnt_\nu \in \OptSet$. For
    $i=1, \ldots, \nu$, let
    $\opnt_i' = \nnY{\opnt_i}{\CHX{\PntSetA_m}}$, and note that
    \begin{math}
        \pntA'= \sum_i \alpha_i \opnt_i' \in \CHX{\PntSetA_m}.
    \end{math}
    Now observe that for all $i$,
    $\distY{\opnt_i}{\opnt_i'} \leq \reach \diam$. In particular,
    $(\opnt_i - \opnt_i')\in \ballY{0}{\reach \diam}$, and hence
    \begin{math}
        \sum_i \alpha_i \pth{\opnt_i - \opnt_i'} \in \ballY{0}{\reach
           \diam}.
    \end{math}
    Therefore
    \begin{math}
        \distSet{\bigl.\pnt}{\CHX{\PntSetA_m}}%
        \leq%
        \distY{\pnt}{\pntA'}%
        \leq%
        \distY{\pnt}{\pntA} + \distY{\pntA}{\pntA'}%
        \leq%
        \eps \diam + \norm{\sum_i \alpha_i \pth{\opnt_i - \opnt_i'}}
        \leq%
        (\eps + \reach) \diam.
    \end{math}
    We conclude that
    \begin{math}
        \distH{\Bigl.\CHX{\PntSetA_m}}{\CHX{\PntSet}} \leq (\eps +
        \reach) \diam.
    \end{math}

    As for the running time, at each iteration, the algorithm computes
    the point in $\PntSet$ furthest away from $\CHX{\PntSetA_i}$.  The
    analysis above assumes these queries are done exactly, which is
    expensive.  
    However, by \lemref{n:n} one can use faster
    $\eps\diam$-approximate queries.  Specifically, in each iteration, 
    for each point $\pnt\in \PntSet$ use \lemref{n:n} to compute an 
    additive $\eps\diam$-approximation to its distance to $\CHX{\PntSetA_i}$, 
    and then select the point in $\PntSet$ with the largest returned approximate distance.
    It is easy to verify this does
    not change the correctness of the algorithm.  Specifically, the
    point $\pntA_i$ chosen in the $i$\th round, may now be $\eps\diam$
    closer to the current convex hull than the furthest point, and so
    in the analysis of \lemref{shrink}, $\opnt_i$ may lie as much as
    $\eps\diam$ above $\pntA_i$.  In particular, the length of
    $\idist$ does not change, however now $\rho$ is only bounded by
    $2\sqrt{r \eps}$ instead of $\sqrt{2r_i\eps}$, and this constant
    factor difference only slightly degrades the constant in front of
    $\eps^{2/3}$ in the lemma statement.  The other effect is that
    when the algorithm stops the distance to the convex hull is
    bounded by $\pth{\reachVal+\eps}\diam$, and this is accounted for
    in the above theorem statement.
    
    Now using \lemref{n:n} directly, it takes
    $O(nmd\allowbreak/\eps^{2})$ time per round to find the
    $\eps\diam$ approximate furthest point, and therefore the total
    running time is $O(nm^2d/\eps^{2})$.
\end{proof}

\subsubsection{Improving the running time further}

The running time of the algorithm of \lemref{main} can be improved
further, but it requires some care.  Let
$\sspace_{i-1} = \vspan(\PntSetA_{i-1})$ denote the linear subspace
spanned by the point set $\PntSetA_{i-1}$, with the orthonormal basis
$v_1, \ldots v_{i-1}$.  For any point $\pnt\in \PntSet$, let
$\pnt_{i-1}'$ denote its orthogonal projection onto the subspace; that
is,
\begin{math}
    \pnt_{i-1}' = \nnY{\pnt}{\sspace_{i-1}} = \sum_{j=1}^i
    \DotProd{\pnt}{v_j} v_j,
\end{math}
and let
$\dSubSpace{i-1}{\pnt} = \distY{\pnt}{\pnt_{i-1}} =
\distSet{\pnt}{\sspace_i}$.
Observe, that for any point $\pntA \in \sspace_{i-1}$ and any point
$\pnt \in \Re^d$, we have that
$\distY{\pnt}{\pntA} = \sqrt{ \distY{\pnt}{\pnt_{i-1}'}^2 +
   \distY{\pnt_{i-1}'}{\pntA}^2 }$
by the Pythagorean theorem, where $\pnt_{i-1}'$ is the projection of
$\pnt$ to $\sspace_{i-1}$.

As such, for any point $\pnt\in \PntSet$, in the beginning of the
$i$\th iteration, the algorithm has the projection and distance of
$\pnt$ to $\sspace_{i-1}$; that is,
\begin{math}
    \pnt_{i-1}' = \pth{\DotProd{\pnt}{v_1}, \ldots,
       \DotProd{\pnt}{v_{i-1}} \bigr. }.
\end{math}
and $\dSubSpace{i-1}{\pnt}$. The algorithm also initially computes for
each point $\pnt \in \PntSet$ its norm $\norm{\pnt}^2$. Therefore, given
any point $\pntA \in \sspace_{i-1}$, its distance to a point
$\pnt \in \PntSet$ can be computed in $O(i)$ time (instead of $O(d)$).
The algorithm also maintains, for every point $\pnt \in \PntSet$, an
approximate nearest neighbor
$\annY{i-1}{\pnt} \in \CHX{\PntSetA_{i-1}}$; that is,
\begin{align*}
    \distSet{\bigl.\pnt}{\CHX{\PntSetA_{i-1}}}%
    \leq%
    \distY{\pnt}{\annY{i-1}{\pnt}}%
    \leq%
    \distSet{\bigl.\pnt}{\CHX{\PntSetA_{i-1}}} + \eps \diam,
\end{align*}
where $\diam=\diamX{\PntSet}$. Naturally, the algorithm also maintains
the distance $\dCHY{i-1}{\pnt} = \distY{\pnt}{\annY{i-1}{\pnt}}$.

Now, the algorithm does the following in the $i$\th iteration:
\smallskip%
\begin{compactenum}[\quad(A)]
    \item Computes, in $O(n)$ time, the point $\pnt \in \PntSet$ that
    maximizes $\dCHY{i-1}{\pnt}$.

    \item Let $\pnt_{i-1}'$ be the projection of $\pnt$ to
    $\sspace_{i-1}$. Computes, in $O(d)$ time, the new vector for the
    basis of $\sspace_i$; that is
    \begin{math}
        \vect_i = \pth{\pnt - \pnt_{i-1}'} / \distY{\pnt}{ \pnt_{i-1}'}.
    \end{math}
    Now $\vect_1, \ldots, \vect_i$ is an orthonormal basis of the
    linear space $\sspace_i$.

    \item For every point $\pnt \in \PntSet$, update its projection
    $\pnt_{i-1}'$ into $\sspace_{i-1}$ into the projection of $\pnt$
    into $\sspace_{i}$, by computing $\DotProd{\pnt}{v_i}$. Also,
    update
    $\dSubSpace{i}{\pnt} = \sqrt{\dSubSpace{i-1}{\pnt}^2
          -\DotProd{\pnt}{v_i}^2}$.%

    \item %
    \itemlab{update:n:n}%
    Let $\PntSet'$ denote the projected points of $\PntSet$ into
    $\sspace_i$.  For every $\pnt \in \PntSet$, we need to update
    $\annY{i-1}{\pnt}$ to $\annY{i}{\pnt}$ (and the associated
    distance). To this end, the algorithm of \lemref{n:n} is called on
    $\pnt_i'$ and $\PntSetA_i$ (all lying in the subspace $\sspace_i$
    which is of dimension $i$). Importantly, the algorithm of
    \lemref{n:n} is being warm-started with the point
    $\annY{i-1}{\pnt}$. Let $\#_i(\pnt)$ be the number of iterations
    performed inside the algorithm of \lemref{n:n} to update the
    nearest-neighbor to $\pnt$. Observe, that the running time for
    $\pnt$ is $O\pth{ \#_i(\pnt) i^2 }$, since
    $i = \cardin{\PntSetA_i}$, the points lie in an $i$ dimensional
    space, and as such, every iteration of the algorithm of
    \lemref{n:n} takes $O(i^2)$ time.
\end{compactenum}%

\begin{lemma}%
    For $m = O \bigl( \kopt / \eps^{2/3} \bigr)$, the running time of
    the above algorithm is
    $O\bigl( nm \bigl( d + m/\eps^2 + m^2 \bigr) \bigr) $.
\end{lemma}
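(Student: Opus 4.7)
Fix the number of outer iterations at $m = O(\kopt/\eps^{2/3})$, justified by \lemref{main}. I will bound, separately, the per-iteration cost of each of the four sub-steps (A)--(D) and sum over $i = 1, \ldots, m$.

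First, the easy steps. Step (A) is a linear scan over $\PntSet$ using the precomputed values $\dCHY{i-1}{\pnt}$, so it costs $O(n)$ per outer iteration, summing to $O(nm)$. Step (B) computes a single vector from the already-stored projection $\pnt_{i-1}'$, costing $O(d)$, summing to $O(dm)$. Step (C) must update the projection of every $\pnt \in \PntSet$ into $\sspace_i$: each such update is a single dot product $\DotProd{\pnt}{v_i}$ (in $\Re^d$) plus the one-dimensional Pythagorean update of $\dSubSpace{i}{\pnt}$, costing $O(d)$ per point, hence $O(nd)$ per iteration and $O(nmd)$ overall. These three steps therefore contribute $O(nmd)$.

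The main work is step (D), and this is the step I expect to be the main obstacle. For each $\pnt \in \PntSet$, we invoke the algorithm of \lemref{n:n} inside $\sspace_i$, warm-started at $\annY{i-1}{\pnt}$, on the set $\PntSetA_i$ of size $i+1$. Because every point now lives in an $i$-dimensional coordinate representation and $|\PntSetA_i| = O(i)$, each inner iteration of \lemref{n:n} (find the extremal point in a direction, then update the current approximation along a segment) costs $O(i^2)$ time, rather than the naive $O(nd)$. Thus the cost of step (D) in round $i$ is $O\bigl(i^2 \sum_{\pnt \in \PntSet} \#_i(\pnt)\bigr)$, and the whole challenge is to control $\sum_i \#_i(\pnt)$.

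I would split $\sum_i \#_i(\pnt)$ into two contributions. \emph{Baseline:} at most one inner iteration per outer iteration is needed simply to process the newly added direction $v_i$ in the warm-started run (either it already satisfies the additive $\eps\diam$ guarantee, or one cheap step suffices), contributing at most $O(m)$ across all rounds. \emph{Productive iterations:} any further inner iteration must shrink the current distance $\ell$ from $\pnt$ to $\annY{\cdot}{\pnt}$ by a factor of $(1 - \eps^2/2)$, exactly as in the proof of \lemref{n:n}; since the warm-start means that the $\ell$ values across outer iterations form a single monotonically decreasing sequence, the same telescoping/decreasing-$\eps_j$ argument used at the end of \lemref{n:n} bounds the number of productive inner iterations, summed across all outer rounds, by $O(1/\eps^2)$. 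Combining, $\sum_i \#_i(\pnt) = O(m + 1/\eps^2)$, so step (D) costs
\begin{math}
O\bigl( n \cdot m^2 \cdot (m + 1/\eps^2) \bigr) = O\bigl( nm^3 + nm^2/\eps^2 \bigr).
\end{math}
Adding the $O(nmd)$ contribution from steps (A)--(C) yields the claimed bound $O\bigl(nm(d + m/\eps^2 + m^2)\bigr)$. The delicate point to verify rigorously is the telescoping across warm-started outer iterations, which amounts to checking that the distance $\dCHY{i}{\pnt}$ behaves as a single non-increasing potential whose total logarithmic decrease is $O(1)$, so that the argument of \lemref{n:n} applies globally rather than once per outer round.
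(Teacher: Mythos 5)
Your proposal is correct and follows essentially the same route as the paper: steps (A)--(C) contribute $O(nmd)$, and for step (D) you decompose $\sum_i \#_i(\pnt)$ into a per-round baseline of $O(m)$ plus $O(1/\eps^2)$ ``productive'' inner iterations amortized across rounds via the monotone warm-started potential, yielding $O(m^2(m + 1/\eps^2))$ per point. This is exactly the paper's observation that $\sum_i(\#_i(\pnt)-1) = O(1/\eps^2)$, and you additionally flag (correctly) the one spot the paper leaves implicit, namely that the decreasing-$\eps_j$ amortization of \lemref{n:n} must be checked to hold globally across warm-started rounds with a moving target convex hull.
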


\begin{proof}
    The algorithm performs $m = O\pth{ \kopt/\eps^{2/3}}$ iterations,
    and this bound the dimension of the output subspace. Every
    iteration of the algorithm takes $O(n d)$ time, except for the
    last portion of updating the approximate nearest point for all the
    points of $\PntSet$ (i.e., \itemref{update:n:n}). The key
    observation is that $\sum_i \pth{\#_i(\pnt) - 1} = O(1/\eps^2)$,
    since if the algorithm of \lemref{n:n} runs
    $\alpha = \#_i(\pnt) > 1$ iterations, then the distance of $\pnt$
    to the convex-hull shrinks by a factor of
    $(1-\eps^2/2)^\alpha$. Arguing as in the proof of \lemref{n:n},
    this can happen $O(1/\eps^2)$ times before $\pnt$ is in distance
    at most $\eps \diam$ from the convex-hull, and can no longer be
    updated. As such, for a single point $\pnt \in \PntSet$, the
    operations in \itemref{update:n:n} takes overall
    \begin{math}
        \sum_{i=1}^m O\pth{ i^2 \pth{\#_i(\pnt) - 1} }%
        =%
        O\pth{ m^2 (m+1/\eps^2)}
    \end{math}
    time.  This implies the overall running time of the algorithm is
    $O\pth{ n \pth{dm + m^2/\eps^2 + m^3 }}$.
\end{proof}%

\subsubsection{The result}

\begin{theorem}
    \thmlab{main}%
    Let $\PntSet$ be a set of $n$ points in $\Re^d$ with diameter
    $\diam = \diamX{\PntSet}$, and let $\eps > 0$ be a parameter, then
    one can compute a set $\PntSetA \subseteq \PntSet$, such
    that
    \begin{compactenum}[\qquad(i)]
        \item
        $\distH{\bigl.\CHX{\PntSetA}}{\CHX{\PntSet}} \leq
        \pth{\reachVal + \eps} \diam$, and
    
        \item $\cardin{\PntSetA} \leq O\pth{\kopt/\eps^{2/3}}$, where
        $\kopt = \koptY{\PntSet}{\eps}$.
    \end{compactenum}
    \smallskip%
    The running time of the algorithm is
    \begin{math}
        O\pth{ nm \pth{ d + m/\eps^2 + m^2 }}.
    \end{math}
    for $m = O \bigl( \kopt / \eps^{2/3} \bigr)$.  (Here, the
    constants hidden in the $O$ are independent of the dimension.)
\end{theorem}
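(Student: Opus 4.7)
The plan is to assemble the theorem from two pieces already developed in this section. The approximation guarantee (i) and the size bound (ii) follow essentially verbatim from \lemref{main}: the shrinkage argument in \lemref{shrink} gives the desired per-round multiplicative decrease charged against points of $\OptSet$, and \clmref{optBounded} guarantees that a charged optimal point $\opnt$ cannot be re-charged once its distance to the running convex hull drops below $\reachValHalf \diam$. The only refinement I would invoke explicitly is the ``decreasing $\eps$'' potential trick used inside the proof of \lemref{n:n}, which cuts the per-optimal-point charge from $O(\eps^{-2/3}\log(1/\eps))$ down to $O(\eps^{-2/3})$ and thus produces the claimed bound $m = O(\kopt/\eps^{2/3})$.

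The genuinely new content is the running time, for which I would hand off to the incremental implementation described in the preceding subsection. That implementation maintains, at the start of iteration $i$, an orthonormal basis $v_1,\ldots,v_{i-1}$ of $\sspace_{i-1} = \vspan(\PntSetA_{i-1})$, and for every $\pnt\in\PntSet$ its projection $\pnt_{i-1}'$, its subspace distance $\dSubSpace{i-1}{\pnt}$, and an approximate nearest point $\annY{i-1}{\pnt}\in \CHX{\PntSetA_{i-1}}$ together with the scalar $\dCHY{i-1}{\pnt}$. Choosing the next greedy point is $O(n)$, extending the orthonormal basis is $O(d)$, and refreshing projections and subspace distances across all of $\PntSet$ is $O(nd)$ per round, which contributes $O(ndm)$ in total across all $m$ rounds.

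The main obstacle, and the heart of the running-time bound, is amortizing the cost of refreshing the approximate nearest neighbors via a warm-started call to the algorithm of \lemref{n:n}. Each inner iteration of that subroutine applied to a point $\pnt$ costs $O(i^2)$, since $\PntSetA_i$ now lives in an $i$-dimensional subspace and one dot product against the current basis costs $O(i)$. Writing $\#_i(\pnt)$ for the number of inner iterations in round $i$, the charge attributable to $\pnt$ is $\sum_{i=1}^m O\bigl(i^2(\#_i(\pnt)-1)\bigr) + O(m^2)$, where the second term accounts for the mandatory single iteration per round. The key structural observation is multiplicative: because each non-trivial inner iteration of \lemref{n:n} shrinks the residual gap by a factor of $1-\eps^2/2$, the total number of non-trivial inner iterations for a fixed $\pnt$, summed over all $i$, is $O(1/\eps^2)$ before $\pnt$ is already $\eps\diam$-close to the current convex hull and gets ``frozen.'' Hence each $\pnt$ contributes $O\bigl(m^2(m + 1/\eps^2)\bigr)$ over the whole execution, and summing over the $n$ points together with the $O(ndm)$ bookkeeping yields the stated bound $O\bigl(nm(d + m/\eps^2 + m^2)\bigr)$.

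Finally, I would note that switching from exact to approximate furthest-point queries in \itemref{update:n:n} is benign: the chosen $\pntA_i$ may lie $\eps\diam$ shy of the true furthest point, so in the argument of \lemref{shrink} the radius $\rho$ is only bounded by $2\sqrt{r\eps}$ rather than $\sqrt{2r\eps}$, and the optimal witness $\opnt_i$ may sit an extra $\eps\diam$ above $\pntA_i$. Both effects only change the constants in front of $\eps^{2/3}$ and absorb into the $\reachVal+\eps$ slack, exactly as recorded at the end of the proof of \lemref{main}.
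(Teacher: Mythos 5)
Your proposal correctly assembles \thmref{main} from \lemref{main} (for the approximation and size guarantees, via \lemref{shrink} and \clmref{optBounded} and the decreasing-$\eps$ trick from \lemref{n:n}) together with the incremental-implementation running-time lemma, which is exactly the paper's route. One trivial arithmetic slip: the cost of the mandatory single iteration per round in \itemref{update:n:n} sums to $O\bigl(\sum_{i=1}^m i^2\bigr) = O(m^3)$ rather than the $O(m^2)$ you wrote, but this is precisely the $m^3$ term already present in your correctly stated final bound $O\bigl(m^2(m + 1/\eps^2)\bigr)$.
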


\begin{remark:unnumbered}
    (A) The constants hidden in the $O$ notation used of \thmref{main}
    are independent of the dimension. In comparison to the other
    algorithms in this paper, the approximation quality is slightly
    worse. However, the advantage is a drastic improvement in the size
    of the approximation.

    (B) The running time of the algorithm of \thmref{main} can be
    further improved, by keeping track for each point
    $\pnt \in \PntSet$, and each point $\pntA \in \PntSetA_i$, the
    distance of $\pntA$ from the hyperplane (in $\sspace_i$) that
    determines whether or not the approximate nearest neighbor to
    $\pnt$ needs to be recomputed. By careful implementation, this can
    be done in the $i$\th iteration in $O(i n)$ time (updating
    $O(i n)$ such numbers in this iteration).  This improves the
    running time to $O\pth{ nm \pth{ d + m/\eps^2 }} $. Motivated by
    our laziness we omit the messy details.
\end{remark:unnumbered}

\begin{remark:unnumbered}
    Note that the algorithm is a simple iterative process, which is
    oblivious to the value of the diameter $\diam = \diamX{\PntSet}$
    and does not use it directly anywhere.  Nevertheless, after
    $O\pth{\kopt/\eps^{2/3}}$ iterations the solution is an
    ${ \pth{\reachVal + \eps} \diam}$-approximation to the convex
    hull.  In practice, one may not know the value of $\kopt$, and so
    this value cannot be used in a stopping condition.  However, it is
    easy to get a $2$-approximation $\diam'$, such that
    $\diam \leq \diam' \leq 2\diam$, by a linear scan of the points.
    Then, one can use the check
    \begin{math}
        \distSet{\bigl.\pntA_i}{\CHX{\PntSetA_i}}%
        =%
        \distH{\CHX{\PntSet}}{\bigl.\CHX{\PntSetA_i}}%
        \leq%
        \pth{\reachVal + \eps} \diam'/2
    \end{math}
    as a stopping condition, where $\PntSetA_i$ is the current
    approximation.
\end{remark:unnumbered}






 \providecommand{\CNFX}[1]{ {\em{\textrm{(#1)}}}}
  \providecommand{\tildegen}{{\protect\raisebox{-0.1cm}{\symbol{'176}\hspace{-0.03cm}}}}
  \providecommand{\SarielWWWPapersAddr}{http://sarielhp.org/p/}
  \providecommand{\SarielWWWPapers}{http://sarielhp.org/p/}
  \providecommand{\urlSarielPaper}[1]{\href{\SarielWWWPapersAddr/#1}{\SarielWWWPapers{}/#1}}
  \providecommand{\Badoiu}{B\u{a}doiu}
  \providecommand{\Barany}{B{\'a}r{\'a}ny}
  \providecommand{\Bronimman}{Br{\"o}nnimann}  \providecommand{\Erdos}{Erd{\H
  o}s}  \providecommand{\Gartner}{G{\"a}rtner}
  \providecommand{\Matousek}{Matou{\v s}ek}
  \providecommand{\Merigot}{M{\'{}e}rigot}
  \providecommand{\CNFSoCG}{\CNFX{SoCG}}
  \providecommand{\CNFCCCG}{\CNFX{CCCG}}
  \providecommand{\CNFFOCS}{\CNFX{FOCS}}
  \providecommand{\CNFSODA}{\CNFX{SODA}}
  \providecommand{\CNFSTOC}{\CNFX{STOC}}
  \providecommand{\CNFBROADNETS}{\CNFX{BROADNETS}}
  \providecommand{\CNFESA}{\CNFX{ESA}}
  \providecommand{\CNFFSTTCS}{\CNFX{FSTTCS}}
  \providecommand{\CNFIJCAI}{\CNFX{IJCAI}}
  \providecommand{\CNFINFOCOM}{\CNFX{INFOCOM}}
  \providecommand{\CNFIPCO}{\CNFX{IPCO}}
  \providecommand{\CNFISAAC}{\CNFX{ISAAC}}
  \providecommand{\CNFLICS}{\CNFX{LICS}}
  \providecommand{\CNFPODS}{\CNFX{PODS}}
  \providecommand{\CNFSWAT}{\CNFX{SWAT}}
  \providecommand{\CNFWADS}{\CNFX{WADS}}

 \providecommand{\CNFX}[1]{ {\em{\textrm{(#1)}}}}
  \providecommand{\tildegen}{{\protect\raisebox{-0.1cm}{\symbol{'176}\hspace{-0.03cm}}}}
  \providecommand{\SarielWWWPapersAddr}{http://sarielhp.org/p/}
  \providecommand{\SarielWWWPapers}{http://sarielhp.org/p/}
  \providecommand{\urlSarielPaper}[1]{\href{\SarielWWWPapersAddr/#1}{\SarielWWWPapers{}/#1}}
  \providecommand{\Badoiu}{B\u{a}doiu}
  \providecommand{\Barany}{B{\'a}r{\'a}ny}
  \providecommand{\Bronimman}{Br{\"o}nnimann}  \providecommand{\Erdos}{Erd{\H
  o}s}  \providecommand{\Gartner}{G{\"a}rtner}
  \providecommand{\Matousek}{Matou{\v s}ek}
  \providecommand{\Merigot}{M{\'{}e}rigot}
  \providecommand{\CNFSoCG}{\CNFX{SoCG}}
  \providecommand{\CNFCCCG}{\CNFX{CCCG}}
  \providecommand{\CNFFOCS}{\CNFX{FOCS}}
  \providecommand{\CNFSODA}{\CNFX{SODA}}
  \providecommand{\CNFSTOC}{\CNFX{STOC}}
  \providecommand{\CNFBROADNETS}{\CNFX{BROADNETS}}
  \providecommand{\CNFESA}{\CNFX{ESA}}
  \providecommand{\CNFFSTTCS}{\CNFX{FSTTCS}}
  \providecommand{\CNFIJCAI}{\CNFX{IJCAI}}
  \providecommand{\CNFINFOCOM}{\CNFX{INFOCOM}}
  \providecommand{\CNFIPCO}{\CNFX{IPCO}}
  \providecommand{\CNFISAAC}{\CNFX{ISAAC}}
  \providecommand{\CNFLICS}{\CNFX{LICS}}
  \providecommand{\CNFPODS}{\CNFX{PODS}}
  \providecommand{\CNFSWAT}{\CNFX{SWAT}}
  \providecommand{\CNFWADS}{\CNFX{WADS}}


\begin{thebibliography}{HKMR15}

\bibitem[AGM14]{agm-nalio-14}
\hrefb{http://www.cs.princeton.edu/~arora/}{S.~{Arora}}, R.~Ge, and A.~Moitra.
\newblock New algorithms for learning incoherent and overcomplete dictionaries.
\newblock In {\em Proc. 27th Annual Conf.~Learning Theory \CNFX{COLT}},
  volume~35, pages 779--806. JMLR.org, 2014.

\bibitem[AHV05]{ahv-gavc-05}
\hrefb{http://www.cs.duke.edu/~pankaj}{P.~K.~{Agarwal}}, \hrefb{http://sarielhp.org}{S.~{{Har-Peled}}}, and K.~Varadarajan.
\newblock Geometric approximation via coresets.
\newblock In J.~E. Goodman, \hrefb{http://www.math.nyu.edu/~pach}{J.~{Pach}}, and E.~Welzl, editors, {\em Combinatorial
  and Computational Geometry}, Math. Sci. Research Inst. Pub. Cambridge, New
  York, NY, USA, 2005.

\bibitem[Bar15]{b-anedb-15}
S.~Barman.
\newblock Approximating nash equilibria and dense bipartite subgraphs via an
  approximate version of caratheodory's theorem.
\newblock In {\em Proc. 47th Annu. ACM Sympos. Theory Comput.\CNFSTOC}, pages
  361--369. {ACM}, 2015.

\bibitem[BBV15]{bbv-erlla15}
M.-F. Balcan, A.~Blum, and S.~Vempala.
\newblock Efficient representations for lifelong learning and autoencoding.
\newblock In {\em Proc. 28th Annual Conf.~Learning Theory \CNFX{COLT}},
  volume~40, pages 191--210. JMLR.org, 2015.

\bibitem[BHR15]{bhr-sagps-15}
A.~{Blum}, \hrefb{http://sarielhp.org}{S.~{{Har-Peled}}}, and B.~{Raichel}.
\newblock {Sparse Approximation via Generating Point Sets}.
\newblock {\em ArXiv e-prints}, July 2015.

\bibitem[BHR16]{bhr-sagps-16}
A.~Blum, \hrefb{http://sarielhp.org}{S.~{{Har-Peled}}}, and B.~Raichel.
\newblock Sparse approximation via generating point sets.
\newblock In Robert Krauthgamer, editor, {\em Proc. 27th ACM-SIAM Sympos.
  Discrete Algs. {\em(SODA)}}, pages 548--557. {SIAM}, 2016.

\bibitem[Cla93]{c-apca-93}
\hrefb{http://cm.bell-labs.com/who/clarkson/}{K.~L. {Clarkson}}.
\newblock Algorithms for polytope covering and approximation.
\newblock In {\em Proc. 3th Workshop Algorithms Data Struct.\CNFWADS}, volume
  709 of {\em Lect. Notes in Comp. Sci.}, pages 246--252. Springer-Verlag,
  1993.

\bibitem[Cla10]{c-csgaf-10}
\hrefb{http://cm.bell-labs.com/who/clarkson/}{K.~L. {Clarkson}}.
\newblock Coresets, sparse greedy approximation, and the frank-wolfe algorithm.
\newblock {\em ACM Trans. Algo.}, 6(4):63:1--63:30, 2010.

\bibitem[Gon85]{g-cmmid-85}
T.~Gonzalez.
\newblock Clustering to minimize the maximum intercluster distance.
\newblock {\em Theoret. Comput. Sci.}, 38:293--306, 1985.

\bibitem[{Har}11]{h-gaa-11}
\hrefb{http://sarielhp.org}{S.~{{Har-Peled}}}.
\newblock {\em Geometric Approximation Algorithms}, volume 173 of {\em
  Mathematical Surveys and Monographs}.
\newblock Amer. Math. Soc., Boston, MA, USA, 2011.

\bibitem[HKMR15]{hkmr-seps-15}
\hrefb{http://sarielhp.org}{S.~{{Har-Peled}}}, N.~Kumar, \hrefb{http://www.cs.umd.edu/~mount/}{D.~{Mount}}, and B.~Raichel.
\newblock Space exploration via proximity search.
\newblock In {\em Proc. 31st Annu. Sympos. Comput. Geom.\CNFSoCG}, volume~34 of
  {\em LIPIcs}, pages 374--389, 2015.

\bibitem[HV04]{hv-hdsfl-04}
\hrefb{http://sarielhp.org}{S.~{{Har-Peled}}} and \hrefb{http://www.cs.uiowa.edu/~kvaradar/}{K.~R. {Varadarajan}}.
\newblock High-dimensional shape fitting in linear time.
\newblock {\em \hrefb{http://link.springer.com/journal/454}{Discrete Comput. {}Geom.}}, 32(2):269--288, 2004.

\bibitem[HW87]{hw-ensrq-87}
D.~Haussler and E.~Welzl.
\newblock $\varepsilon$-nets and simplex range queries.
\newblock {\em \hrefb{http://link.springer.com/journal/454}{Discrete Comput. {}Geom.}}, 2:127--151, 1987.

\bibitem[Lon01]{l-upsda-01}
P.~M. Long.
\newblock Using the pseudo-dimension to analyze approximation algorithms for
  integer programming.
\newblock In {\em Proc. 7th Workshop Algorithms Data Struct.\CNFWADS}, volume
  2125 of {\em Lect. Notes in Comp. Sci.}, pages 26--37, 2001.

\bibitem[Nov62]{n-ocpp-62}
A.B.J. Novikoff.
\newblock On convergence proofs on perceptrons.
\newblock In {\em Proc. Symp. Math. Theo. Automata}, volume~12, pages 615--622,
  1962.

\bibitem[SWW12]{sww-ersud-12}
D.A. Spielman, H.~Wang, and J.~Wright.
\newblock Exact recovery of sparsely-used dictionaries.
\newblock In {\em Proc. 25th Annual Conf.~Learning Theory \CNFX{COLT}}, pages
  37.1--37.18, 2012.

\end{thebibliography}

\end{document}